\documentclass[psamsfonts]{article}
\usepackage{amsmath, amsthm, amssymb}
\usepackage[dvips]{graphics}
\usepackage[]{color}
\usepackage{subfigure}
\usepackage{multirow}
\usepackage{graphicx} 
\usepackage{bm}     
\usepackage{setspace} 
\usepackage{booktabs} 
\usepackage{verbatim}
\usepackage{appendix} 
\usepackage{fullpage} 



\numberwithin{equation}{section}

\theoremstyle{plain}
\newtheorem{thm}{Theorem}[section]

\newtheorem{corr}[thm]{Corollary}
\newtheorem{lemma}[thm]{Lemma}

\theoremstyle{definition}

\theoremstyle{remark}

\newcommand{\un}{\underline}

\newcommand{\mr}{\mathrm}
\newcommand{\mc}{\mathcal}
\newcommand{\ra}{\rightarrow}

\newcommand{\lra}{\longrightarrow}

\newcommand{\mv}{\mathbf}

\renewcommand{\t}{\text}
\renewcommand{\it}{\textit}

\newcommand{\Sig}{\Sigma}
\newcommand{\sig}{\sigma}
\newcommand{\ep}{\epsilon}

\newcommand{\del}{\delta}
\newcommand{\alp}{\alpha}

\newcommand{\til}{\tilde}

\renewcommand{\th}{\theta}
\newcommand{\Th}{\Theta}
\renewcommand{\l}{\lambda}

\newcommand{\s}{\sqrt}


\begin{document}



\title{Density Estimation and Classification via Bayesian Nonparametric Learning of Affine Subspaces}
\author{ Abhishek Bhattacharya \\ Indian Statistical Institute \\ Kolkata India \\   abhishek@isical.ac.in
	\and	Garritt  Page \\ Department of Statistical Science \\ Duke University \\ page@stat.duke.edu
     	\and David  Dunson \\ Department of Statistical Science \\ Duke University \\ dunson@stat.duke.edu}

\maketitle

\begin{abstract}

It is now practically the norm for data to be very high dimensional in areas such as genetics, machine vision, image analysis and many others. When analyzing such data, parametric models are often too inflexible while nonparametric procedures tend to be non-robust because of insufficient data on these high dimensional spaces.  It is often the case with high-dimensional data that most of the variability tends to be along a few directions, or more generally along a much smaller dimensional submanifold of the data space.      In this article, we propose a class of models that flexibly learn about this submanifold and its dimension which simultaneously performs dimension reduction.  As a result, density estimation is carried out efficiently.  When performing classification with a large predictor space, our approach allows the category probabilities to vary nonparametrically with a few features expressed as linear combinations of the predictors.  As opposed to many black-box methods for dimensionality reduction, the proposed model is appealing in having clearly interpretable and identifiable parameters.  Gibbs sampling methods are developed for posterior computation, and the methods are illustrated in simulated and real data applications.

\end{abstract}

{{\bf keywords}: Dimension reduction;  Classifier; Variable selection;  Nonparametric Bayes}


\doublespace
\section{Introduction} \label{s0}
Data that are generated from experiments or studies carried out in areas such as genetics, machine vision, and image analysis (to name a few) are routinely high dimensional.  Because such data sets have become so commonplace,  designing data efficient inference techniques that scale to massive dimensional Euclidean and even non-Euclidean spaces has attracted considerable attention in the statistical and machine learning literature. 

When dealing with high dimensional data, it is typically the case that parametric models are too rigid to explain all the variability present in the data.  Conversely,  flexible nonparametric approaches suffer from the well known curse of dimensionality.  With this in mind, a common approach is to make procedures more scalable to high dimensions by learning a lower dimensional subspace the data are concentrated near.  This approach is supported by the success of mixture models with a few components in fitting high-dimensional data.  In particular,  consider a mixture of $N$ Gaussian kernels, $\sum_{j=1}^N \pi_jN_m(\cdot;\mu_j,\sig^2I_m)$, $\mu_j\in \Re^m$.   The $k=N-1$ largest eigenvalues corresponding to the covariance matrix for this type of density will typically be very large,  while the remaining $m-k$ eigenvalues will all be equal and relatively much smaller. We may visualize such data lying close to some affine $k$ dimensional subspace of $\Re^m$ containing the mean and the $k$ corresponding eigen-vectors as its directions. If we knew that subspace, we could model the data projected onto that subspace with a nonparametric density model, while using some simple parametric distribution on the orthogonal residual vector. Robustness would be attained by fitting a flexible model on only a selected few coordinates.


There is a large literature on the estimation of Euclidean subspaces, affine subspaces, and manifold subsets.  Many procedures are algorithmic based.     Elhamifar  and  Vidal \cite{elhamifar} propose an algorithmic based method of clustering data that lie close to multiple affine subspaces.   See the references there in for a nice overview of algorithmic type approaches. Because such methods are deterministic, no measures of uncertainty are available.  A probabilistic modeling approach is proposed by Chen \emph{et al.}~\cite{chen}.  They employ a fully Bayesian model for density estimation of high dimensional data that reside close to a lower dimensional subregion (possibly a manifold) of unknown dimension.  This subregion is approximated using a nonparametric Bayes mixture of factor analyzers in which Dirichlet and beta processes are employed to simultaneously allow uncertainty in the number of mixture components, the number of factors in each component and the locations of zeros in the loadings matrix.   Although their methodology is flexible, it is very much a complex and over-parametrized ``black box'' leading to challenging computation.   

We propose a fully Bayesian procedure that very flexibly and uniquely identifies a lower dimensional affine subspace in a coherent modeling framework.  After having identified the subspace and its dimension we model the coordinates of the orthogonal projection of the data onto that subspace using an infinite mixture of Gaussians while independently using a zero mean Gaussian to model the data component orthogonal to that subspace. Among all possible coordinate choices, we prefer isometric coordinates  (those which preserve the geometry of the space). To obtain such coordinates, an orthogonal basis for the subspace must be employed which will require working on the Stiefel manifold  (the space of all such basis matrices). In addition to interpretability and identifiability, advantages to using an orthogonal basis include equivalence of matrix inversion and transpose and faster MCMC convergence.  We do not limit the cluster contours to be homogeneous, but use a singular value decomposition type sparse representation for the kernel covariance. By doing so, we avert the problem of dealing with massive matrices and yet make the model highly flexible.

An appealing feature to our methodology is that it is not a ``black box", rather nice interpretations accompany model parameters. For example, when estimating the affine subspace, which is proved to be unique, concern lies in estimating the orthogonal projection matrix associated with that space, and its orthogonal shift from the origin. Indeed, under our setting, the subspace turns out to be the $k$-principal subspace for the distribution, $k$ being the subspace dimension. In this regard, the methodology developed here provides a coherent extension of the Principal Component Analysis (PCA) of Hoff \cite{hoff1} to a nonparametric setting. The estimation of the projection matrix and orthogonal shift are carried out explicitly under appropriate loss functions.


We also consider building efficient classifiers that entertain a high dimensional feature space. The idea is to seek the minimal subspace of the feature space such that the response depends on the predictors only through their projection onto that subspace.  There has been recent developments in the machine learning and statistical communities with regards to building classifiers in the presence of a high dimensional feature space.  Sun \emph{et al.}~\cite{YijunSun} propose a classifier that essentially breaks a complex nonlinear problem into a set of local linear problems that scales nicely to a very high dimensional space.  They also provide a nice review of algorithmic based procedures to building classifiers most of which are black boxes and estimation of a principal subspace is not entertained.  Recently, Cucala \emph{et al.}~\cite{cucala} proposed a probabilistic perspective to the $k$-nearest neighbor classifiers.  However, apart from not scaling well to a high dimensional feature space, the minimal subspace of the feature space is not estimated.    Estimating a minimal subspace of a high dimensional feature space has been addressed in a regression setting.   Tokdar \emph{et al.}~\cite{tokdar} model the conditional distribution of a response given the minimal subspace directly with a Gaussian process.  Recently, Reich \emph{et al.}~\cite{reich} propose a method of sufficient dimension reduction by modeling a conditional distribution directly after placing a prior distribution on the minimal subspace (which they call a central subspace).   See references there in for frequentist approaches to estimating this subspace.    Hannah \emph{et al.}~\cite{hannah} use Dirichlet process mixtures to flexibly model the relationship between a set of features and a response in a generalized linear model framework.   Shahbaba and Neal \cite{shahbaba} focus on Dirichlet process mixture models in a nonlinear modeling framework.   


We focus on modeling the joint so that given the  subspace, the response and the projection of the features onto that subspace follow a nonparametric infinite mixture model while the feature component orthogonal to the subspace follows a parametric model independent of the response and the projection. Dependence between the response and features is induced through the mixture distribution.  


The remainder of this article is organized as follows.  Section 2 provides some preliminaries,  Section 3 details the class of models to be used for density estimation along with 
theoretical results dealing with large prior support and strong posterior consistency.  In Section 4 we investigate the identifiability of model parameters and give details of their estimation.   
Section 5 details computational strategies while Section 6 outlines a small simulation study and examples.   In Section 7 we develop an efficient classifier and provide some examples and a small simulation study in addition to briefly introducing ideas with regards to regression.  We finish with some concluding remarks in Section 8.

\section{Preliminaries}
A $k$-dimensional affine subspace of $\Re^m$ (which is a $k$-dimensional Euclidean manifold) can be expressed as
$$
S = \{ Ry + \th \colon y \in \Re^m \}
$$
with $R$ being a $m \times m$ rank $k$ \it{projection matrix} (it satisfies
$R = R' = R^2$, rank($R) = k$)  and $\th \in \Re^m$ satisfying $R\th = 0$.
Notice that  there is a one to one correspondence between the subspace $S$ and the pair $(R,\th)$ 
with $\th$ being the \it{projection} of the origin into $S$ and $R$ the projection matrix of the shifted linear subspace 
$$ L = S - \th = \{ Ry \colon y \in \Re^m \}.$$ 
The projection of any $x\in \Re^m$ into $S$ is defined as the $x_0 \in S$ satisfying $\|x - x_0\| = \min\{ \|x - y \|: y \in S \}$ where $\|\cdot\|$ denotes the Euclidean norm.  For any affine subspace $S$ as defined above, the solution turns out to be $x_0 = Rx + \th$. Similarly, the projection of $x \in \Re^m$ into $L$ is $x_0^* =  Rx$, hence the name projection matrix for $R$. We denote the projection of  $x\in \Re^m$ into $S$ as $Pr_S(x)$. 


Each $x \in \Re^m$ can be given coordinates $\til x \in \Re^k$ such that $x = U\til x + \th$ where $U$ is a matrix  whose columns $\{U_1,\ldots, U_k\}$ form a basis of the column space of $R$.   If $U$ is chosen to be orthonormal (i.e., $U'U = I_k$ and $R=UU'$), then the coordinates ($\tilde{x}$) are \emph{isometric}.  That is, they preserve the inner product on $S$ (and hence volume and distances).   With such a basis, the projection $Pr_S(x)$ of an arbitrary $x \in \Re^m$ into $S$ has isometric coordinates $U'x$.
Thus, $U$ gives $k$ mutually perpendicular `directions' to $S$ while  $\th$ may be viewed as the `origin' of $S$.  We will call $\th$ the \it{origin} and $U$ an \it{orientation} for $S$.


The \it{residual} of $x\in \Re^m$ (which we denote as $R_S(x) = x - Pr_S(x) = x - Rx - \th$)  lies on a linear subspace that is perpendicular to $L$.  That is, $R_S(x) \in S^\perp$ where
$$S^\perp = \{ (I-R)y \colon y \in \Re^m \}. $$
Notice that the projection matrix of $S^\perp$ is $I-R$.  Now if we let $V$ denote an orthonormal basis for the column space of $I-R$ (i.e., $V'V=I_{m-k}$, $VV' = I-R$), then isometric residual coordinates are given by  $V'x \in \Re^{m-k}$.

For a sample lying close to such a subspace $S$, it is natural to assume that the data residuals are centered around $0$ with low variability while the data projected into $S$ comes from a possibly multi-modal distribution supported on $S$.  Figure \ref{2Dpic} illustrates such a sample cloud. The observations are drawn from a two-component mixture of bivariate normals with cluster centers $(1,0)$ and $(0,1)$  and band-width of 0.5. As a result they are clustered around the subspace (line) $x+y =1$. For a specific sample point $x$, $Pr_S(x)$, $R_S(x)$, and $\th$ are highlighted.
\begin{figure}[htbp]\label{fig1}
\begin{center}
\includegraphics[scale=0.6]{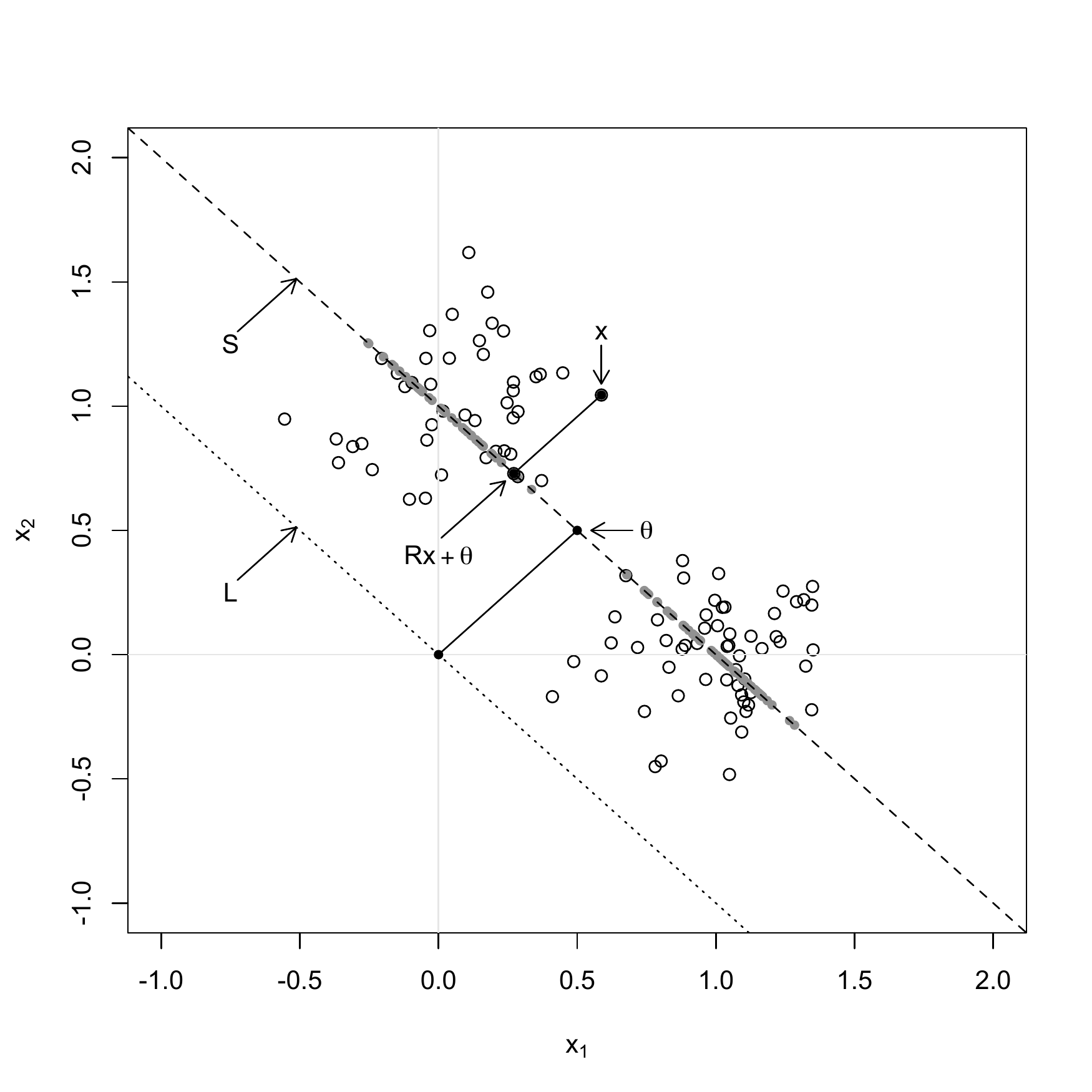}
\caption{Graphical representation of the affine subspace ($S$), the orthogonal shift ($\theta$), and the projection of a point into $S$ (these are the solid dots with particular emphasis given to $Rx + \theta$). }
\label{2Dpic}
\end{center}
\end{figure}

If  we let $Q$ to be a distribution on $\Re^m$ with finite second order moments, then for $d \le m$ the $d$ \it{principal affine subspace} of $Q$ is the minimizer of following risk function
\begin{eqnarray}\label{e1}
R(S) = \int_{\Re^m} \| x - Pr_S(x) \|^2 Q(dx),
\end{eqnarray}
with the minimization carried out over all $d$-dimensional affine subspaces $S$. The minimum value of expression 2.1 turns out to be $\sum_{d+1}^m \l_j$, where $\l_1 \ge \ldots \ge \l_m$ are the ordered eigenvalues of the covariance of $Q$. In addition,  a unique minimizer exists if and only if $\l_d > \l_{d+1}$. If this is indeed the case, then the $d$ principal affine subspace ($S_o$) has projection matrix $R = UU'$ (here $U$ is any orthonormal basis for the subspace spanned by a set of $d$ independent eigenvectors corresponding to the first $d$ eigenvalues) and origin $\th = (I-R)\mu$ (with $\mu$ being the mean of $Q$). Notice that when $d=0$, $S_o$ is the point set $\mu$.

In the case that $d$ is unknown,  we can find an optimal value of $d$ by considering 
\begin{eqnarray}\label{e3}
R(d,S) = f(d) + \int_{\Re^m} \| x - Pr_S(x) \|^2 Q(dx), \ 0 \le d \le m
\end{eqnarray}
as a risk function for some fixed increasing convex function $f$. For $f$ linear, say, $f(d) = ad$, $a> 0$, the risk has a unique minimizer if and only if $\l_{d+1} < a < \l_d$ for some $d$, with $\l_0 = \infty$ and $\l_{m+1} = 0$. Then the minimizing dimension $d_o$ is that value of $d$ while the optimal space $S_o$ is the $d_o$ principal affine subspace.
We will call $d_o$ the \it{principal dimension} of $Q$. For the observations in Figure \ref{2Dpic}, the principal dimension is $d_o = 1$ with principal subspace
\begin{align*}
S_o = \left\{ \left(\begin{array}{cc} 1/2   &  -1/2  \\ -1/2  &  1/2  \\ \end{array} \right)x + 
         \left(\begin{array}{c} 1/2 \\  1/2  \\ \end{array} \right)
   :x \in \Re^2 \right\}.
\end{align*}
 
Before detailing general modeling strategies, we introduce notation that will be used through out. By $\mc{M}(S)$ we denote the space of all probabilities on the space $S$.  $M(m,k)$ will denote real matrices of order $m\times k$ (with $M(m)$ denoting the special case of $m=k$), $M^+(m)$ will denote the space of all $m\times m$ positive definite matrices. For $U\in M(m,k)$, $\mc{C}(U)$ and $\mc{N}(U)$ will represent the column and null space of $U$ respectively.  We will represent the space of all $m\times m$ rank $k$ projection matrices by $P_{k,m}$. That is,  
\[
P_{k,m} = \{ R \in M(m)\colon R=R'=R^2, \t{rank}(R)=k \}.
\]

One important manifold referred to in this paper is the Steifel manifold (denoted by $V_{k,m}$) which is the space whose points are $k$-frames in $\Re^m$ (here $k$-frame refers to a set of $k$ orthonormal vectors in $\Re^m$).   That is,
\[
V_{k,m} = \{A \in M(m,k): A'A = I_k\}.
\] 
We denote the orthogonal group $\{A\in\Re^m\colon A'A = I_m\}$ by $O(m)$ which is $V_{m,m}$. The space $V_{k,m}$  is a compact non-Euclidean Riemannian manifold. Because $M(m,k)$ is embedded in Euclidean space, 
it inherits the Riemannian metric tensor which can be used to define the volume form, which in turn can be used as the base measure to construct a parametric family of densities. Several parametric densities have been studied on this space, and exact or MCMC sampling procedures exist. For details, see Chikuse \cite{chikuse2}.
One important density which we will be using as a prior is the Bingham-von Mises-Fisher density which has the expression
\begin{align*}
BMF(x;A,B,C) \propto \mr{etr}(A' x + C x' B x).
\end{align*}
The parameters are $A \in M(k,m)$, $B \in M(k)$ symmetric and $C \in M(m)$, while etr denotes exponential trace.  As a special case, we obtain the uniform distribution which has the constant density $1/\t{Vol}(V_{k,m})$.

\section{Density model }\label{s2}
Consider a random variable $X$ in $\Re^m$. Let there be a $k$ dimensional affine subspace $S$, $0\le k \le m$, with projection matrix $R$ and origin $\th$ such that the projection of $X$ into this subspace follows a location mixture density on the subspace (with respect to its volume form) given by
$$
Y = Pr_S (X) \sim \int_S (2\pi)^{-k/2} |U'AU|^{1/2} \exp \{-\frac{1}{2} (y - w)'A(y-w)\} Q(dw) \\
$$  
where $y \in S$ is the projection of $x$ with parameters $Q \in \mc{M}(S)$,  $U \in V_{k,m}$, and $A$  a $m\times m$ positive semi-definite (p.s.d.) matrix such that $U'AU \in M^+(k)$.  When $k=0$, $S$ denotes the point set $\{\th\}$ and $Y=\th$.  Note that the density expression depends on $U$ only through $UU'$.   A general choice for $A$ besides being positive definite (p.d.) could be $A = U_0 \Sigma^{-1}_0 U_0'$ for some specific orientation $U_0$ and p.d. $\Sigma_0 \in M^+(k)$.
As a result, the isometric coordinates $U_0'X$ of $Pr_S (X)$ follow a non-parametric Gaussian mixture model on $\Re^k$ given by
\begin{eqnarray}\label{e4}
U_0'X \sim \int_{\Re^k} N_k(\cdot;\mu,\Sigma_0)P(d\mu),\ P \in \mc{M}(\Re^k).
\end{eqnarray}
Here $\mu = U'_0w$ for $w \in S$.
Independently, let the residual $R_S(X)$ follow a mean zero homogeneous density on  $S^{\perp}$ given by
$$R_S(X) \sim \sig^{-(m-k)}\exp\{-\frac{\|x\|^2}{2\sig^2}\},$$
$x\in S^{\perp}$ and parameter $\sig > 0$.  If $k=m$, then $S^{\perp} = \{0\}$ and $R_S(X) = 0$.
As a result, with any orientation $V \in V_{m-k,m}$ for $S^{\perp}$, the isometric coordinates $V'X$ of $R_S(X)$ follow the Gaussian density
\begin{equation}\label{e5}
V'X \sim N_{m-k} (;V'\th,\sig^2I_{m-k}).
\end{equation}
Combine equations \eqref{e4} and \eqref{e5} to get the full density of $X$ as
\begin{align}
X \sim f(x;\Th) & = \int_{\Re^k} N_m(x; \phi(\mu), \Sig  )P(d\mu), \label{e7}\\
\phi(\mu) = U_0\mu + \th, \ \Sig & = U_0(\Sigma_0 - \sig^2 I_k)U_0' + \sig^2 I_m, \label{e90}
\end{align}
with parameters $\Th = (k,U_0,\th, \Sigma_0, \sig, P)$. Here $U_0 \in V_{k,m}$ and $\th\in\Re^m$ satisfies $U_0'\th = 0$.
The affine subspace $S$ has projection matrix $R = U_0 U_0'$ and origin $\th$.  For $k=0$, $f(x;\Th) = N_m(x; \th, \sig^2I_m)$.  Using a flexible multimodal density model for a few data coordinates (which are chosen using a suitable basis) and an independent centered Gaussian structure on the remaining coordinates allows efficient density estimation on very high dimensional spaces.  

A common choice of nonparametric prior on $P$ can be a full support discrete model, such as a Dirichlet process, which allows clustering of the data around $S$.  
An alternative way to identify the intercept $\th$ would be to set it equal to $\mr{E}(X)$.   However, this would require the prior on $P$ to be such that $\bar\mu \equiv \int \mu P(d\mu)=0$ making the Dirichlet process prior inappropriate. For this reason, we set $\theta$ to be the origin of $S$ instead.

With $\Sigma_0$ p.d. and $\sig^2 > 0$, the within cluster covariance $\Sig$ lies in $M^+(m)$ and has a sparse representation without being homogeneous. The residual variance $\sig^2$ dictates how ``close" $X$ lies to $S$, with $\sigma^2=0$ implying that $X \in S$.  In \eqref{e7}, one may mix across $\Sigma_0$ by replacing $P(d\mu)$ by $P(d\mu \ d\Sigma_0)$ and achieve more generality. 

To make model \eqref{e7} even more sparse, without loss of generality, we can allow $\Sigma_0$ to be a p.d. diagonal matrix. To prove that we do not lose any generality, consider a singular value decomposition (s.v.d.) of a general $\Sigma_0$, say $\Sigma_0 = ODO'$, $O \in O(k)$, and  replace $\Sigma_0$ by diagonal $D$, and $U_0$ by $U_0O'$.  If $P$ is appropriately transformed, then the model is unaffected.  With a diagonal $\Sigma_0$, the within cluster covariance has $k$ eigenvalues from $\Sigma_0$ and the rest all equal to $\sig^2$. The columns of $U_0$ are the orthonormal eigenvectors corresponding to $\Sigma_0$.

It is easy to check that $S$ is the $k$-principal subspace for the model, if and only if $\Sigma_0 + \int_{\Re^k} (\mu - \bar\mu)(\mu - \bar\mu)'P(d\mu) > \sig^2 I_k$. Here $A_1 > A_2$ refers to $A-B$ being p.d. This holds, for example, when $\Sigma_0 \ge \sig^2 I_k$ and $P$ is non-degenerate. Further under the model, $k$ is the principal dimension of $X$ for a range of risk functions as in \eqref{e3} with linear $f$.

\subsection{Weak Posterior Consistency}\label{s2.1}
Consider a mixture density model $f$ as in \eqref{e7}.
Let $\mc{D}(\Re^m)$ denote the space of all densities on $\Re^m$.  Let $\Pi_f$ denote the prior induced on $\mc{D}(\Re^m)$ through the model and suitable priors on the parameters.  Theorem \ref{t1} shows that  $\Pi_f$ satisfies the Kullback-Leibler (KL) condition at the true density $f_t$ on $\Re^m$.
That is, for any $\ep > 0$, $\Pi_f(K_\ep(f_t)) > 0$, where $K_\ep(f_t) = \{ f \colon KL(f_t;f) < \ep \}$ denotes a $\ep$-sized KL neighborhood of $f_t$ and $KL(f_t;f) = \int\log\frac{f_t}{f}f_t dx$ is the KL divergence.  
As a result, using the Schwartz theorem \cite{schwartz}, weak posterior consistency follows. That is, given a random sample $\mv{X}_n=$ $X_1,\ldots,X_n$ i.i.d. $f_t$,  the posterior probability of any weak open neighborhood of $f_t$ converges to 1 a.s. $f_t$.

Let $p(k)$ denote the prior distribution of $k$.  We consider discrete priors that are supported on the set $\{0,\ldots,m\}$. 
Let $\pi_1(U_0,\th|k)$ denote some joint prior distribution of $U_0$ and $\th$ that has support on $\{(U_0,\th)\in V_{k,m}\times\Re^m: U_0'\th = 0 \}$.  
As previously recommended, we consider a diagonal $\Sig_0 = diag(\sig_1^2, \ldots, \sig_k^2)$ and set a joint prior on the vector $\bm{\sig} = (\sig,\sig_1,\ldots,\sig_k) \in (\Re^+)^{k+1}$ that we denote with $\pi_2(\bm{\sig}|k)$.  Further, we assume that parameters ($U_0$,  $\th$), $\bm{\sig}$, and $P$ are jointly independent given $k$.  That said, Theorem \ref{t1} can be easily adapted to other prior choices.   We also consider the following reasonable conditions on the true density $f_t$. 
\begin{itemize}
\item[{\bf A1}:] $0 < f_t(x) < A$ for some constant $A$ for all $x\in \Re^m$.
\item[{\bf A2}:]  $|\int \log\{f_t(x)\} f_t(x)dx| < \infty$.
\item[{\bf A3}:] For some $\del > 0$, $\int \log\frac{f_t(x)}{f_{\del}(x)}f_t(x)dx < \infty$, where $f_{\del}(x) = \mathop{\inf}_{y:\|y-x\|<\del} f_t(y)$.
\item[{\bf A4}:] For some $\alp > 0$, $\int \|x\|^{2(1+\alp)m}f_t(x)dx < \infty$.
\end{itemize}


\begin{thm}\label{t1}
Set the prior distributions for $k$, ($U_0$,  $\th$), $\bm{\sig}$, and $P$ to those described previously such that $p(m)>0$, $\pi_2(\Re^+\times (0,\ep)^m  | k=m)>0$ for any $\ep>0$, and the conditional prior on $P$ given $k=m$ contains $P_{f_t}$ in its  weak support.  Then under assumptions {\bf A1}-{\bf A4} on $f_t$, the KL condition is satisfied by $\Pi_f$ at $f_t$.
\end{thm}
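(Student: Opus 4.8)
The plan is, fixing $\ep>0$, to show $\Pi_f(K_\ep(f_t))>0$ by working entirely inside the sub-model obtained on conditioning on $k=m$, which is exactly why the three positivity hypotheses in the theorem concern that value. When $k=m$ one has $S=\Re^m$, so $Pr_S(x)=x$ and the constraint $U_0'\th=0$ forces $\th=0$; moreover \eqref{e90} collapses to $\Sig=U_0\Sig_0U_0'$ and the residual scale $\sig$ drops out, so \eqref{e7} reduces to the location mixture of Gaussians $f(x;\Th)=\int_{\Re^m}N_m(x;U_0\mu,U_0\Sig_0U_0')P(d\mu)=(P\circ U_0^{-1})*\psi_{U_0\Sig_0U_0'}(x)$, where $\psi_\Sig:=N_m(\cdot;0,\Sig)$ and $P\circ U_0^{-1}$ is the push-forward of $P$ by $U_0$. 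With $U_0=I_m$ this is just $P*\psi_{\Sig_0}$, and since $\pi_2(\Re^+\times(0,\eta)^m\mid k=m)>0$ for every $\eta>0$ the (diagonal, positive-definite) matrix $\Sig_0$ may be taken to have arbitrarily small entries. Writing $P_{f_t}$ for the probability measure with density $f_t$, the task thus reduces to the classical one: show that densities $P*\psi_{\Sig_0}$ with $\Sig_0$ small and $P$ weakly close to $P_{f_t}$ lie in $K_\ep(f_t)$, and that the prior charges the corresponding parameter set.

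First I would establish the mollification estimate $KL(f_t;f_t*\psi_{\Sig_0})\to0$ as the entries of $\Sig_0$ tend to $0$. The kernels $\psi_{\Sig_0}$ form an approximate identity, so $f_t*\psi_{\Sig_0}(x)\to f_t(x)$ at every Lebesgue point of $f_t$, hence a.e.; it then suffices to dominate the integrand $\log\{f_t(x)/(f_t*\psi_{\Sig_0}(x))\}$ uniformly in $\Sig_0$ and invoke dominated convergence. By {\bf A1}, $f_t*\psi_{\Sig_0}\le A$, so the integrand is $\ge\log f_t(x)-\log A$, which is $f_t$-integrable by {\bf A2}; and with $\del$ as in {\bf A3}, the inequality $f_t*\psi_{\Sig_0}(x)\ge f_\del(x)\int_{\|z\|<\del}\psi_{\Sig_0}(z)\,dz$ together with $\int_{\|z\|<\del}\psi_{\Sig_0}(z)\,dz\to1$ shows that for $\Sig_0$ small the integrand is $\le\log2+\log\{f_t(x)/f_\del(x)\}$, a nonnegative, $f_t$-integrable bound by {\bf A3}. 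Dominated convergence finishes this step, and applying the $\pi_2$-hypothesis with a small enough $\eta$ we fix a diagonal $\Sig_0^{*}$, lying in the support of $\pi_2(\cdot\mid k=m)$, with $KL(f_t;f_t*\psi_{\Sig_0^{*}})<\ep/4$.

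With $\Sig_0^{*}$ fixed, I would show that the map $(U_0,\Sig_0,P)\mapsto f(\cdot;\Th)=\int N_m(\cdot;U_0\mu,U_0\Sig_0U_0')P(d\mu)$ (at $k=m$, $\th=0$) carries some neighbourhood of $(I_m,\Sig_0^{*},P_{f_t})$ (Euclidean in the first two coordinates, weak in the last) into $K_\ep(f_t)$. Pointwise in $x$, $P\mapsto\int N_m(x;U_0\mu,U_0\Sig_0U_0')P(d\mu)$ is weakly continuous, since the kernel is bounded and continuous in $\mu$, and, by dominated convergence, jointly continuous in $(U_0,\Sig_0)$ near $(I_m,\Sig_0^{*})$; hence $f(x;\Th)\to f_t*\psi_{\Sig_0^{*}}(x)$ pointwise as $(U_0,\Sig_0,P)\to(I_m,\Sig_0^{*},P_{f_t})$. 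To upgrade pointwise convergence to KL-closeness I would split $\int f_t\log\{f_t/f(\cdot;\Th)\}$ at a large radius $T$. On $\{\|x\|\le T\}$, restricting the mixing integral to $\{\|\mu\|<1\}$ and using $P_{f_t}(\|\mu\|<1)>0$ with the lower semicontinuity of $P\mapsto P(\{\|\mu\|<1\})$ gives a strictly positive lower bound on $f(x;\Th)$ uniform over a neighbourhood of $(I_m,\Sig_0^{*},P_{f_t})$, so bounded convergence, against $f_t(x)\,dx$ restricted to the ball, applies and this part is controlled by $KL(f_t;f_t*\psi_{\Sig_0^{*}})<\ep/4$ plus a small error. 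On $\{\|x\|>T\}$ the same restriction gives $f(x;\Th)\ge c\exp\{-c'\|x\|^2\}$ with constants uniform over that neighbourhood, and combining this with {\bf A1}, {\bf A2} and the polynomial moment bound {\bf A4} makes the tail contribution $<\ep/4$ once $T$ is large, uniformly over the neighbourhood. As the weak topology on probability measures on $\Re^m$ is metrizable, these estimates yield an open product neighbourhood of $(I_m,\Sig_0^{*},P_{f_t})$ mapping into $K_\ep(f_t)$. Finally, by the assumed conditional independence of $k$, $(U_0,\th)$, $\bm{\sig}$ and $P$ given $k$, together with $p(m)>0$, the membership of $(I_m,0)$ in the support of $\pi_1(\cdot\mid k=m)$, the membership of $\Sig_0^{*}$ in the support of $\pi_2(\cdot\mid k=m)$, and the hypothesis that $P_{f_t}$ lies in the weak support of the conditional prior on $P$, that product neighbourhood receives positive prior probability; since its image is inside $K_\ep(f_t)$, we conclude $\Pi_f(K_\ep(f_t))>0$.

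I expect the main obstacle to be the tail estimate in the last step: controlling $\int_{\|x\|>T}f_t\log\{f_t/f(\cdot;\Th)\}$ uniformly as $P$ ranges over a weak neighbourhood of $P_{f_t}$, not just at $P_{f_t}$ itself, since a $P$ with slightly lighter tails can make $f(x;\Th)$ extremely small precisely where $f_t$ still carries mass; assumption {\bf A4} is exactly what underwrites this uniform control. Once the reduction to $k=m$ is in hand, an efficient alternative is to invoke a standard Kullback-Leibler support theorem for location mixtures of multivariate Gaussians, whose density-side hypotheses are {\bf A1}--{\bf A4} and whose prior-side hypotheses are precisely the weak-support and vanishing-scale conditions assumed here.
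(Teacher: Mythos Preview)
Your proposal is correct and follows the same overall strategy as the paper: condition on $k=m$, where $\th=0$ is forced and the model collapses to a full-dimensional Gaussian location mixture, and then establish the KL-support property for that mixture under {\bf A1}--{\bf A4}. The paper's proof is much shorter because it takes two shortcuts you do not. First, instead of treating $U_0$ as a free parameter to be varied near $I_m$, the paper conditions further on $U_0$ and argues that $\Pi_f(K_\ep(f_t)\mid k=m,U_0)>0$ for \emph{every} $U_0\in O(m)$; integrating this over $\pi_1(\cdot\mid k=m)$ gives the result with no assumption at all on the support of the prior for $U_0$. Your argument, by contrast, needs the unstated hypothesis that $(I_m,0)$ lies in the support of $\pi_1(\cdot\mid k=m)$, which the theorem does not assume. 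Second, after the reparametrization $Q=P\circ\phi^{-1}$ with $\phi(x)=U_0x$ (a homeomorphism on $\mc{M}(\Re^m)$, so weak-support statements transfer), the paper simply invokes Theorem~2 of Wu and Ghosal for the mixture $\int N_m(\cdot;\nu,\Sig)\,Q(d\nu)$; your mollification step, compact/tail split, and uniform lower bound $f(x;\Th)\ge c\exp(-c'\|x\|^2)$ are essentially a self-contained sketch of that very result, as you yourself acknowledge in your closing paragraph. What your route buys is transparency about exactly where each of {\bf A1}--{\bf A4} enters; what the paper's route buys is brevity and, via the conditioning-on-$U_0$ trick, freedom from any support hypothesis on $\pi_1$.
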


\begin{proof}
The result follows if it can be proved that $\Pi_f(K_\ep(f_t)| k=m,U_0)>0$ for all $\ep>0$ and $U_0 \in O(m)$, because then
\begin{eqnarray*}
\Pi_f(K_\ep(f_t)) \ge p(m)\int_{O(m)} \Pi_f(K_\ep(f_t)| k=m,U_0)d\pi_1(U_0|k=m) > 0
\end{eqnarray*}
Now, given $k=m$ and $U_0$, density \eqref{e7} can be expressed as
\begin{eqnarray}\label{e2}
f(x; Q, \Sig) = \int_{\Re^m} N_m(x; \nu , \Sig) Q(d\nu),
\end{eqnarray}
with $Q = P\circ \phi^{-1}$.  Here $\phi(x) = U_0x$, and $\Sig = U_0 \Sig_0 U_0'$.
The isomorphism $\phi: \Re^m \ra \Re^m$ being continuous and surjective ensures the same for the mapping $P \mapsto Q$. This in turn ensures that under the Theorem assumptions on the prior, the prior on $P$ and $\bm{\sig}$ induces a prior on $Q$ that contains $P_{f_t}$ in its weak support and an independent prior on $\Sig$  which induces a prior on its maximum eigen-value that contains $0$ in its support.   Then with a slight modification to the proof of Theorem 2 in Wu and Ghosal \cite{WuGhosal2010}, under assumptions {\bf A1-A4} on $f_t$,  we can show that $f_t$ is in the KL support of $\Pi_f$.
\end{proof}

\subsection{Strong Posterior Consistency}\label{s2.2}
Using the density model \eqref{e7} for $f_t$, Theorem \ref{t4} establishes strong posterior consistency, that is, the posterior probability of any total variation (or $L_1$ or strong) neighborhood of $f_t$ converges to 1 almost surely or in probability,  as the sample size tends to infinity.
The priors on the parameters are chosen as in Section \ref{s2.1}. To be more specific, the conditional prior on $P$ given $k$ ($k\ge 1$) is chosen to be a Dirichlet process $DP(w_k P_k)$ ($w_k>0$, $P_k \in \mc{M}(\Re^k)$). 
The proof requires the following three Lemmas.  The proof of Lemma \eqref{l3} can be found in \cite{barron}, while the proofs of Lemmas \eqref{l1} and \eqref{l2} are provided in the appendix.   

In what follows $B_{r,m}$ refers to the set $\{ x \in \Re^m \colon \|x\| \le r \}$. For a subset $\mc{D}$ of densities and $\ep>0$, the $L_1$-metric entropy $N(\ep,\mc{D})$ is defined as the logarithm of the minimum number of $\epsilon$-sized (or smaller) $L_1$ subsets needed to cover $\mc{D}$.

\begin{lemma}\label{l3}
Suppose that $f_t$ is in the KL support of the prior $\Pi_f$  on the density space $\mc{D}(\Re^m)$.
For every $\ep>0$, if we can partition $\mc{D}(\Re^m)$ as $\mc{D}_n^{\ep} \cup \mc{D}_n^{\ep c}$ such that $N(\ep,\mc{D}_n^{\ep})/n \lra 0$ and $Pr(D_n^{\ep c}| \mv{X}_n) \lra 0$ a.s. or in probability $P_{f_t}$, 
then the posterior probability of any 
$L_1$ neighborhood of $f_t$ converges to 1 a.s. or in probability $P_{f_t}$.
\end{lemma}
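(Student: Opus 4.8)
The plan is to reconstruct the classical sieve argument behind \cite{barron}: use the Kullback--Leibler support hypothesis to bound the denominator of the posterior from below, and use the metric entropy rate to build a single test that is uniformly exponentially powerful against the portion of the sieve lying outside the neighborhood, thereby bounding the numerator. Concretely, writing for a measurable set $\mc{A}\subseteq\mc{D}(\Re^m)$
\[
\Pi_f(\mc{A}\mid\mv{X}_n)=\frac{\int_{\mc{A}}\prod_{i=1}^{n}\frac{f(X_i)}{f_t(X_i)}\,d\Pi_f(f)}{\int_{\mc{D}(\Re^m)}\prod_{i=1}^{n}\frac{f(X_i)}{f_t(X_i)}\,d\Pi_f(f)},
\]
the hypothesis that $f_t$ is in the KL support of $\Pi_f$ together with the strong law of large numbers (the estimate underlying Schwartz's theorem \cite{schwartz}) gives that for every $\be>0$,
\[
e^{n\be}\int_{\mc{D}(\Re^m)}\prod_{i=1}^{n}\frac{f(X_i)}{f_t(X_i)}\,d\Pi_f(f)\lra\infty\quad\text{a.s. }P_{f_t},
\]
so it suffices to show that the numerator with $\mc{A}=U^{c}$, for $U$ an $L_1$ neighborhood of $f_t$, decays like $e^{-cn}$ for some $c>0$.

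Next I would fix such a $U$, of $L_1$ radius $\de$, and choose $\ep$ small relative to $\de$ in order to invoke the partition $\mc{D}(\Re^m)=\mc{D}_n^{\ep}\cup\mc{D}_n^{\ep c}$ from the hypothesis. On $U^{c}\cap\mc{D}_n^{\ep c}$ the posterior mass is at most $\Pi_f(\mc{D}_n^{\ep c}\mid\mv{X}_n)$, which tends to $0$ by assumption. For $U^{c}\cap\mc{D}_n^{\ep}$ I would cover $\mc{D}_n^{\ep}$ by $\exp\{N(\ep,\mc{D}_n^{\ep})\}$ balls of $L_1$-radius $\ep$; any such ball whose centre $g$ is at $L_1$-distance at least $\de-\ep$ from $f_t$ is a composite alternative still bounded away from $f_t$, so Le Cam's testing inequality supplies a test $\phi_{n,g}$ with $E_{f_t}\phi_{n,g}\le e^{-cn}$ and $\sup E_h(1-\phi_{n,g})\le e^{-cn}$ taken over $h$ in that ball, with $c$ depending only on $\de$. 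Taking $\Phi_n=\max_g\phi_{n,g}$ over the relevant (subexponentially many) balls, the bound $N(\ep,\mc{D}_n^{\ep})/n\to 0$ yields $E_{f_t}\Phi_n\le\exp\{N(\ep,\mc{D}_n^{\ep})-cn\}\to 0$ exponentially, while $\sup_{f\in U^{c}\cap\mc{D}_n^{\ep}}E_f(1-\Phi_n)\le e^{-cn}$.

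Finally I would assemble the pieces. By Borel--Cantelli (the in-probability case follows by passing to subsequences) $\Phi_n\to 0$ a.s., and $\Pi_f(U^{c}\mid\mv{X}_n)\le\Phi_n+\Pi_f(\mc{D}_n^{\ep c}\mid\mv{X}_n)+(1-\Phi_n)\Pi_f(U^{c}\cap\mc{D}_n^{\ep}\mid\mv{X}_n)$, so only the last term remains. Its numerator satisfies
\[
E_{f_t}\Big[(1-\Phi_n)\int_{U^{c}\cap\mc{D}_n^{\ep}}\prod_{i=1}^{n}\frac{f(X_i)}{f_t(X_i)}\,d\Pi_f(f)\Big]\le\sup_{f\in U^{c}\cap\mc{D}_n^{\ep}}E_f(1-\Phi_n)\le e^{-cn},
\]
so by Markov's inequality and Borel--Cantelli it is eventually below $e^{-c'n}$ for any $c'<c$; choosing $\be<c'$ in the denominator bound forces the ratio to $0$ a.s. Combined with the two vanishing terms above this gives $\Pi_f(U^{c}\mid\mv{X}_n)\to 0$, i.e.\ the posterior probability of $U$ converges to $1$.

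The main obstacle — and essentially the only place the entropy hypothesis is genuinely used — is the test-construction step: producing a single $\Phi_n$ whose exponentially small errors survive the union bound over the covering of $\mc{D}_n^{\ep}$, which works precisely because $N(\ep,\mc{D}_n^{\ep})$ grows strictly more slowly than $n$. Everything else (the likelihood-ratio lower bound, the Borel--Cantelli bookkeeping, the choice of constants $\de\gg\ep$ and $c>c'>\be$) is routine once the test is in hand.
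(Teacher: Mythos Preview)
Your proposal is correct and is precisely the classical sieve--plus--tests argument of Barron, Schervish and Wasserman that underlies this lemma. The paper itself does not supply a proof of Lemma~\ref{l3} but simply refers the reader to \cite{barron}; your reconstruction faithfully reproduces that argument (KL support to lower--bound the denominator, Le~Cam tests aggregated over an $\ep$-net of $\mc{D}_n^{\ep}$ to control the numerator on $U^c\cap\mc{D}_n^{\ep}$, and the hypothesis on $\mc{D}_n^{\ep c}$ handling the remainder), so there is nothing to contrast.
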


\begin{lemma}\label{l1}
For positive sequences $h_n \ra 0$ and $r_n\ra \infty$ and $\ep>0$, define a sequence of subsets of $\mc{D}(\Re^m)$ as
$$ \mc{D}_n^\ep = \{ f(\cdot;\Th): \Th \in H_n^\ep \}, \ H_n^\ep = \{ \Th \colon \min(\bm{\sig}) \ge h_n, \|\th\| \le r_n, P(B_{r_n,k}^c) < \ep \} $$
with $f(\cdot;\Th)$ as in \eqref{e7}.  Set a prior on the density parameters as in Section \ref{s2.1}. Assume that $supp(\pi_2(\cdot|k)) \subseteq [0,A]^{k+1}$ for some $A>0$ for all $0\le k \le m$.
Then  $N(\ep,\mc{D}_n^\ep) \le C (r_n/h_n)^m$ where $C$ is a constant independent of $n$.
\end{lemma}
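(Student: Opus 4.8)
The plan is to adapt the standard sieve--entropy computation for mixtures of Gaussians (in the spirit of the arguments behind \cite{WuGhosal2010}) to the affine--subspace structure of \eqref{e7}--\eqref{e90}. The point of that structure is that any $f(\cdot;\Th)$ with $\Th\in H_n^\ep$ is, up to $L_1$ error $O(\ep)$, a location mixture of $m$--dimensional Gaussians with a \emph{common, well-conditioned} covariance and a mixing measure supported in a Euclidean ball of radius $O(r_n)$. Since $k$ ranges over the finite set $\{0,\dots,m\}$ and a union over these values multiplies covering numbers by at most $m{+}1$, I would fix $k$ and bound the $L_1$--entropy of the corresponding slice of $\mc{D}_n^\ep$. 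Throughout, ``constant'' means a quantity depending only on $\ep,m,A$, and the stated inequality is read for $n$ large (each fixed-$n$ sieve has finite entropy, so a constant can absorb small $n$).

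First I would truncate the mixing measure: for $\Th\in H_n^\ep$ replace $P$ by $P_0=P|_{B_{r_n,k}}+P(B_{r_n,k}^c)\,\de_0$. Then $P$ and $P_0$ differ by at most $2P(B_{r_n,k}^c)<2\ep$ in total variation, and since $\|N_m(\cdot;\nu,\Sig)\|_1=1$ the associated densities differ by at most $2\ep$ in $L_1$; so it suffices to $\ep$--cover the subfamily with $P$ supported in $B_{r_n,k}$. For such $P$, pushing it forward by $\mu\mapsto U_0\mu+\th$ and using \eqref{e7}--\eqref{e90} gives $f(\cdot;\Th)=\int_{\Re^m}N_m(\cdot\,;\nu,\Sig)\,\til Q(d\nu)$, where $\til Q$ is a probability measure on $B_{2r_n,m}$ (here $\|\th\|\le r_n$ is used) and $\Sig$ is a symmetric matrix with eigenvalues $\sig_1^2,\dots,\sig_k^2,\sig^2$, all in $[h_n^2,A^2]$ on the sieve and under $supp(\pi_2(\cdot|k))\subseteq[0,A]^{k+1}$; in particular $h_n^2 I_m\preceq\Sig\preceq A^2 I_m$. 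Thus the task reduces to bounding the $L_1$--entropy of location mixtures of $m$--dimensional Gaussians with mixing measure on $B_{2r_n,m}$ and covariance in that bounded, well-conditioned set.

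The two estimates I would establish are: (i) $\|N_m(\cdot;\nu_1,\Sig)-N_m(\cdot;\nu_2,\Sig)\|_1\le c\,h_n^{-1}\|\nu_1-\nu_2\|$, by whitening with $\Sig^{-1/2}$ and using that the $L_1$--modulus of translation of the standard Gaussian is Lipschitz; and (ii) $\|N_m(\cdot;\nu,\Sig_1)-N_m(\cdot;\nu,\Sig_2)\|_1\le c\,h_n^{-2}\|\Sig_1-\Sig_2\|_F$, from Pinsker together with $\mr{tr}(B^{-1})-m+\log\det B\le c\|B-I\|_F^2$ for $B=\Sig_1^{-1/2}\Sig_2\Sig_1^{-1/2}$ near $I$, plus $\|B-I\|_F\le h_n^{-2}\|\Sig_1-\Sig_2\|_F$. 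Both pass to mixtures by convexity, and perturbing the weight vector of a fixed finite mixture costs at most its $\ell_1$--norm in $L_1$. Then I would build the net: fix a cubic grid $G$ on $B_{2r_n,m}$ of spacing $\asymp h_n\ep$, so $|G|\le C(r_n/h_n)^m$; reassign each bit of mass of $\til Q$ to its nearest point of $G$ (perturbing the density by $\le\ep$ via (i)); cover the resulting weight simplex $\Delta_{|G|-1}$ in $\ell_1$ by an $\ep$--net of size $\le(c/\ep)^{|G|}$; and cover $\{\Sig:h_n^2 I\preceq\Sig\preceq A^2 I\}$ in Frobenius norm at scale $\asymp h_n^2\ep$ by a net of size $\le\big(cA^2\sqrt m/(h_n^2\ep)\big)^{m(m+1)/2}$ (net points need not lie in the model). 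The product of the latter two nets is an $O(\ep)$--cover, so the log-cardinality is at most $|G|\log(c/\ep)+\tfrac12 m(m+1)\log\!\big(cA^2\sqrt m/(h_n^2\ep)\big)\le C(r_n/h_n)^m+m(m+1)\log(1/h_n)+C$. Since $\log(1/h_n)=o\big((r_n/h_n)^m\big)$ the first term dominates; adding $\log(m{+}1)$ for the union over $k$ and rescaling $\ep$ yields the claim.

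The step I expect to be the crux is making the count \emph{polynomial} in $r_n/h_n$ at the level of the entropy rather than the covering number: the right reduction is to a finite grid of locations together with a weight simplex, not a brute-force grid over all mixing measures, and the key fact is that an $\ep$--net of the $|G|$--dimensional simplex has only $\exp\!\big(O(|G|\log(1/\ep))\big)$ elements, so the log-cardinality is \emph{linear} in $|G|\asymp(r_n/h_n)^m$. A secondary subtlety is that the covariances can be as ill conditioned as $\l_{\min}=h_n^2$, which forces the $\Sig$--net down to scale $h_n^2\ep$ and contributes $\asymp m^2\log(1/h_n)$ to the entropy; one must check this is negligible against $(r_n/h_n)^m$, which holds because $r_n\to\infty$ and $h_n\to 0$.
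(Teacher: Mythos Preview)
Your proposal is correct and follows essentially the same route as the paper: both reduce $f(\cdot;\Th)$ to an $m$-dimensional Gaussian location mixture $\int N_m(\cdot;\nu,\Sig)\,Q(d\nu)$ with $Q=P\circ\phi^{-1}$ putting mass $\ge 1-\ep$ on a ball of radius $O(r_n)$ and with all eigenvalues of $\Sig$ in $[h_n^2,A^2]$, and then bound the $L_1$-entropy of that family. The only difference is cosmetic: the paper cites Lemma~1 of Wu and Ghosal \cite{WuGhosal2010} for the last step, whereas you unpack that lemma explicitly (grid of locations, simplex net, covariance net), and the paper uses the orthogonality $U_0'\th=0$ to get radius $\sqrt{2}\,r_n$ where you use the triangle inequality to get $2r_n$---neither affects the conclusion.
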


\begin{lemma}\label{l2}
Set a prior as in Lemma~\ref{l1} with a $DP(w_kP_k)$ prior on $P$ given $k$, $k\ge 1$.
Assume that the base probability $P_k$ has a density $p_k$ which is positive and continuous on $\Re^k$.
Assume that there exist positive sequences $h_n \ra 0$ and $r_n\ra \infty$ such that 
$$
{\bf B1}: \lim_{n\ra\infty}n\del_{kn}^{-1}h_n^{-k}\exp(-r_n^2/8A^2) = 0
$$
holds where
$$
\del_{kn} = \inf\{ p_k(\mu) :  \mu\in \Re^k, \ \|\mu\| \le A+ r_n/2 \}, \ k=1,\ldots,m.
$$
Also assume that under the prior $\pi_2(\cdot|k)$ on $\bm{\sig}$, $Pr( \min(\bm{\sig}) < h_n|k)$ decays exponentially. 
Then under the Assumptions of Theorem~\ref{t1}, for any $\ep>0$, $k\ge 1$,
$$ E_{f_t}\big\{Pr\big( P(B_{r_n,k}^c) \ge \ep \big| k, \mv{X}_n \big)\big\} \lra 0.$$
If {\bf B1} is strengthed to 
$$
{\bf B1'}: \sum_{n=1}^\infty n\del_{kn}^{-1}h_n^{-k}\exp(-r_n^2/8A^2) < \infty,
$$
and the sequence $r_n$ satisfies  $\sum_{n=1}^\infty r_n^{-2(1+\alp)m} < \infty$ with $\alp$ as in Assumption {\bf A4}, then the conclusion can be strengthed to 
$$ \sum_{n=1}^\infty E_{f_t}\big\{Pr\big( P(B_{r_n,k}^c) \ge \ep \big| k, \mv{X}_n \big)\big\} < \infty.$$
\end{lemma}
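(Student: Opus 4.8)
The plan is to avoid a direct likelihood--ratio argument and instead exploit conjugacy of the Dirichlet process. Write $\Th_{-P}=(k,U_0,\th,\Sig_0,\sig)$ and, following the mixture representation of \eqref{e7}, introduce the latent locations $\mu_1,\dots,\mu_n$, so that $\mu_i\mid P\sim P$ on $\Re^k$ and $X_i\mid\mu_i,\Th_{-P}\sim N_m(\phi(\mu_i),\Sig)$. Given $\mu_{1:n}$ the posterior law of $P$ is $DP\big(w_kP_k+\sum_{i=1}^n\del_{\mu_i}\big)$, so $P(B_{r_n,k}^c)$ follows a $\mathrm{Beta}\big(w_kP_k(B_{r_n,k}^c)+N_n,\;w_kP_k(B_{r_n,k})+n-N_n\big)$ law with $N_n:=\#\{i\le n:\|\mu_i\|>r_n\}$. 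Markov's inequality on this Beta variable then gives
\[
E_{f_t}\big\{Pr\big(P(B_{r_n,k}^c)\ge\ep\mid k,\mv{X}_n\big)\big\}\;\le\;\frac{1}{\ep}\Big(\frac{w_k}{n}+\frac{E_{f_t}(N_n)}{n}\Big),
\]
and, using instead the second moment of the Beta together with $E(N_n^2)\le 2n\,E(N_n)$ (Cauchy--Schwarz), a bound whose sum over $n$ is controlled once $E_{f_t}(N_n)/n$ is summable. Hence everything reduces to showing $E_{f_t}(N_n)=o(n)$, resp.\ $\sum_n E_{f_t}(N_n)/n<\infty$.

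Since $E_{f_t}(N_n)=n\,Pr(\|\mu_1\|>r_n)$ by exchangeability, I would split on whether $\|X_1\|\le r_n/2$. On $\{\|X_1\|>r_n/2\}$ the bound is trivial and contributes $n\,P_{f_t}(\|X_1\|>r_n/2)=o(n)$, summably after dividing by $n$ once $\sum_n r_n^{-2(1+\alp)m}<\infty$, by Assumption {\bf A4} and Markov's inequality. On $\{\|X_1\|\le r_n/2\}$ the decisive step is a \emph{deterministic} bound on the conditional escape probability. Conditioning in addition on $\mu_{2:n}$ and on $\Th_{-P}$ restricted to the sieve $\{\min(\bm{\sig})\ge h_n\}$, the conditional density of $\mu_1$ is proportional to $N_m(X_1;\phi(\mu),\Sig)\big\{w_k p_k(\mu)\,d\mu+\sum_{j\ge 2}\del_{\mu_j}(d\mu)\big\}$. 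The constraint $U_0'\th=0$ and the form \eqref{e90} of $\Sig$ give the factorisation $N_m(x;U_0\mu+\th,\Sig)=N_k(U_0'x;\mu,\Sig_0)\,N_{m-k}(V'x;V'\th,\sig^2 I_{m-k})$ (with $V\in V_{m-k,m}$ spanning $S^\perp$), and since the residual factor does not involve $\mu$ it cancels from numerator and denominator, leaving a ratio in the $U_0'$--coordinates alone. On the sieve $|\Sig_0|^{-1/2}\le h_n^{-k}$ and every eigenvalue of $\Sig_0$ is at most $A^2$; moreover $\|U_0'X_1\|\le\|X_1\|\le r_n/2$, so $\|U_0'X_1-\mu\|\ge r_n/2$ whenever $\|\mu\|>r_n$. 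Bounding the numerator mass on $\{\|\mu\|>r_n\}$ by a Gaussian tail ($Pr(\chi_k^2\ge t)\lesssim e^{-t/2}$ at $t=r_n^2/(4A^2)$, the polynomial prefactor absorbed into the constant), the base part of the denominator from below by $w_k\del_{kn}$ times a fixed positive $k$--dimensional Gaussian ball probability, and $\#\{j\ge 2:\|\mu_j\|>r_n\}$ from above by $n$, one obtains a constant $\til C$ independent of $n$ with
\[
Pr\big(\|\mu_1\|>r_n\mid\mu_{2:n},\Th_{-P},k,\mv{X}_n\big)\;\le\;\til C\,n\,\del_{kn}^{-1}h_n^{-k}\exp\big(-r_n^2/(8A^2)\big)\;=:\;\eta_n
\]
on $\{\min(\bm{\sig})\ge h_n\}\cap\{\|X_1\|\le r_n/2\}$. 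Averaging out $\mu_{2:n}$ and $\Th_{-P}$ and then integrating against $f_t$ on $\{\|X_1\|\le r_n/2\}$ yields a contribution $\le\eta_n+E_{f_t}\{Pr(\min(\bm{\sig})<h_n\mid k,\mv{X}_n)\}$, and the last term is $o(1)$, resp.\ summable, by the assumed exponential prior decay of $\{\min(\bm{\sig})<h_n\}$ together with the lower bound on the (conditional) marginal likelihood supplied by the Kullback--Leibler support property in Theorem~\ref{t1}.

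Collecting the three contributions gives $E_{f_t}(N_n)\le n\,\eta_n+n\,E_{f_t}\{Pr(\min(\bm{\sig})<h_n\mid k,\mv{X}_n)\}+n\,P_{f_t}(\|X_1\|>r_n/2)$, so the first assertion follows because {\bf B1} is exactly $\eta_n\to 0$ (up to the constant $\til C$), and the strengthened assertion follows from {\bf B1'} (equivalently $\sum_n\eta_n<\infty$) together with $\sum_n r_n^{-2(1+\alp)m}<\infty$. I expect the deterministic bound $\eta_n$ to be the main obstacle: it must hold uniformly over all configurations of the remaining latent locations and over $\Th_{-P}$ on the sieve, and it is precisely the Gaussian--tail bookkeeping there that manufactures the combination $n\,\del_{kn}^{-1}h_n^{-k}e^{-r_n^2/(8A^2)}$ appearing in {\bf B1}; a secondary and more routine point is the separate control of the off--sieve event $\{\min(\bm{\sig})<h_n\}$.
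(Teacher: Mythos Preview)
Your proposal is correct and follows essentially the same route as the paper: both use Dirichlet process conjugacy to reduce to a Beta posterior for $P(B_{r_n,k}^c)$, apply Markov's inequality, use exchangeability to reduce to $Pr(\|\mu_1\|>r_n\mid k,\mv X_n)$, and then split on $\{\|X_1\|\le r_n/2\}$ and on the sieve $\{\min(\bm\sig)\ge h_n\}$, bounding the key ratio via the Gaussian tail in the numerator and the $P_k$--part of the Polya urn predictive (yielding $\del_{kn}$) in the denominator. The only cosmetic differences are that you condition pointwise on $(\mu_{2:n},\Th_{-P})$ to obtain the bound $\eta_n$ whereas the paper bounds the ratio $A(\mv X_n)/B(\mv X_n)$ of integrals directly (these are equivalent since your pointwise bound implies the integral ratio bound), and that your second--moment Beta argument for the summable conclusion is a small refinement not made explicit in the paper.
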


With these three Lemmas we are now able to state and proof the theorem that ensures strong posterior consistency is attained.  

\begin{thm}\label{t4}
Consider a prior and sequences $h_n$ and $r_n$ for which the Assumptions of Lemma~\ref{l2} are satisfied. Further suppose that $n^{-1}(r_n/h_n)^m \lra 0$. 
Also assume that the sequence $r_n$  and  the prior $\pi_1(\cdot|k)$ on $(U,\th)$ satisfy the condition $Pr(\|\th\|>r_n|k)$ decays exponentially for $k \le m-1$.
Assume that the true density satisfies the conditions of Theorem~\ref{t1}.
Then the posterior probability of any $L_1$ neighborhood of $f_t$ converges to 1 in probability or almost surely depending on Assumption {\bf B1} or ${\bf B1'}$. 
\end{thm}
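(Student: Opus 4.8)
The plan is to deduce the result from Lemma~\ref{l3}, taking as the ``good'' set of the required partition the sieve $\mc{D}_n^\ep$ constructed in Lemma~\ref{l1}. Two things must be verified: that $N(\ep,\mc{D}_n^\ep)/n\lra 0$, and that $Pr(\mc{D}_n^{\ep c}\mid\mv{X}_n)\lra 0$ a.s.\ or in probability $P_{f_t}$. The first is immediate: Lemma~\ref{l1} gives $N(\ep,\mc{D}_n^\ep)\le C(r_n/h_n)^m$, and the extra hypothesis $n^{-1}(r_n/h_n)^m\lra 0$ finishes it. The real work is the second. First I would record that, since $f_t$ obeys {\bf A1}--{\bf A4} and the prior satisfies the mild extra conditions of Theorem~\ref{t1} (the Dirichlet-process and residual-variance specifications subsume them), that theorem places $f_t$ in the KL support of $\Pi_f$; this supplies the standing hypothesis of Lemma~\ref{l3} and, via the classical argument of Barron \cite{barron}, forces $\liminf_n e^{n\be}\int\prod_{i=1}^n\{f(X_i)/f_t(X_i)\}\,d\Pi_f(f)=\infty$ a.s.\ $P_{f_t}$ for every $\be>0$ -- the marginal-likelihood lower bound needed below.

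Next I would pull $\mc{D}_n^{\ep c}$ back to the parameter $\Th=(k,U_0,\th,\Sig_0,\sig,P)$, where (up to a $\Pi_f$-null set) it becomes $H_n^{\ep c}=\mc{U}_{1n}\cup\mc{U}_{2n}\cup\mc{U}_{3n}$ with $\mc{U}_{1n}=\{\min(\bm{\sig})<h_n\}$, $\mc{U}_{2n}=\{\|\th\|>r_n\}$ and $\mc{U}_{3n}=\{P(B_{r_n,k}^c)\ge\ep\}$, so it suffices to kill the posterior mass of each. For $\mc{U}_{3n}$: conditioning on $k$ and summing over the finitely many values $k\in\{1,\dots,m\}$ (the set is empty when $k=0$), Lemma~\ref{l2} gives $E_{f_t}\{Pr(\mc{U}_{3n}\mid\mv{X}_n)\}\lra 0$ under {\bf B1}, hence convergence in probability, and under ${\bf B1'}$ the summability there plus Borel--Cantelli upgrades it to a.s.\ convergence. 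For $\mc{U}_{1n}$ and $\mc{U}_{2n}$: the assumed exponential decay of $Pr(\min(\bm{\sig})<h_n\mid k)$ and of $Pr(\|\th\|>r_n\mid k)$ for $k\le m-1$ (for $k=m$ the constraint $U_0'\th=0$ forces $\th=0$, so $\mc{U}_{2n}$ is void) yields $\Pi_f(\mc{U}_{1n})+\Pi_f(\mc{U}_{2n})\le C e^{-cn}$ for some $c>0$. Since $E_{f_t}\big[\int_{\mc{U}}\prod_i\{f(X_i)/f_t(X_i)\}\,d\Pi_f\big]=\Pi_f(\mc{U})$ for any measurable $\mc{U}$ (Tonelli, using $f_t>0$), choosing $\be\in(0,c)$ and combining Markov's inequality with Borel--Cantelli makes $e^{n\be}\int_{\mc{U}_{jn}}\prod_i\{f(X_i)/f_t(X_i)\}\,d\Pi_f\lra 0$ a.s.\ for $j=1,2$; dividing by the marginal likelihood (which, times the same $e^{n\be}$, diverges a.s.) gives $Pr(\mc{U}_{1n}\mid\mv{X}_n)+Pr(\mc{U}_{2n}\mid\mv{X}_n)\lra 0$ a.s.

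Assembling the three bounds gives $Pr(\mc{D}_n^{\ep c}\mid\mv{X}_n)\lra 0$ -- in probability under {\bf B1}, almost surely under ${\bf B1'}$. Together with $N(\ep,\mc{D}_n^\ep)/n\lra 0$ and the KL-support property, Lemma~\ref{l3} then yields that the posterior of every $L_1$ neighborhood of $f_t$ tends to $1$ in the corresponding mode; since $\ep>0$ is arbitrary, the proof is complete.

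The one genuinely delicate ingredient is Lemma~\ref{l2}: controlling the posterior tail probability $Pr(P(B_{r_n,k}^c)\ge\ep\mid k,\mv{X}_n)$ for the Dirichlet-process mixing measure tightly enough (this is where conditions {\bf B1}/${\bf B1'}$ and $\sum_n r_n^{-2(1+\alp)m}<\infty$ enter) that it still vanishes after being divided by the exponentially small marginal-likelihood lower bound. Granted that lemma and Lemma~\ref{l1}, the theorem itself is a routine assembly of the Barron-type estimate, the prior tail hypotheses, and the metric-entropy count, with the only care needed being the bookkeeping over the finitely many values of $k$ and the degenerate cases $k=0$ and $k=m$.
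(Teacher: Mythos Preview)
Your proposal is correct and follows essentially the same route as the paper: invoke Theorem~\ref{t1} for the KL condition, use the sieve $\mc{D}_n^\ep$ of Lemma~\ref{l1} so that the entropy bound $N(\ep,\mc{D}_n^\ep)\le C(r_n/h_n)^m$ together with $n^{-1}(r_n/h_n)^m\to 0$ handles the entropy requirement, decompose $H_n^{\ep c}$ into the three sets $\{\min(\bm{\sig})<h_n\}$, $\{\|\th\|>r_n\}$, $\{P(B_{r_n,k}^c)\ge\ep\}$, dispose of the first two via exponential prior decay plus the KL condition and the third via Lemma~\ref{l2} after summing over $k$, and conclude by Lemma~\ref{l3}. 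Your write-up is in fact more explicit than the paper's in spelling out the Barron-type numerator/denominator argument for $\mc{U}_{1n},\mc{U}_{2n}$ and in noting the degenerate cases $k=0$ and $k=m$, but the logical structure is identical.
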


\begin{proof}
Theorem~\ref{t1} implies that the KL condition is satisfied. Consider the partition  $\mc{D}(\Re^m) = \mc{D}_n^{\ep} \cup \mc{D}_n^{\ep c}$. Then $N(\ep,\mc{D}_n^{\ep})/n \lra 0$. Write
$$
Pr(\mc{D}_n^{\ep c}| \mv{X}_n) = Pr \big( \{ f(.;\Th): \Th \in H_{n}^{\ep c} \} \big| \mv{X}_n \big),
$$
where
$$
H_n^{\ep c} = \{ \Th: \min(\bm{\sig}) < h_n \} \cup \{ \Th: \|\th\| > r_n \} \cup \{ \Th: P(B_{r_nk}^c)>\ep \}.
$$
The posterior probability of the first two sets above converge to 0 a.s. because the prior probability decays exponentially and the prior satisfies the KL condition.
Note that
$$
Pr\big(\{ \Th: P(B_{r_nk}^c)>\ep \} \big| \mv{X}_n \big) \le \sum_{j=1}^m Pr\big(\{ \Th: P(B_{r_nk}^c)>\ep \} \big| \mv{X}_n, k=j \big)
$$
and Lemma~\ref{l2} implies that this probability converges to 0 in probability/a.s. based on Assumption {\bf B1}/${\bf B1'}$.
Using Lemma~\ref{l3}, the result follows.
\end{proof}

Now we give an example of a prior that satisfies the conditions of Theorem~\ref{t4}. Any discrete distribution on $\{0,\ldots,m\}$ having $m$ in its support can be used as the prior $p$ for $k$.  Given $k$ ($k\ge 1$), we draw $U_0$ from a density on $V_{k,m}$. Given $k$ and $U_0$, under $\pi_1$, $\th$ is drawn from a density on the vector-space $\mc{N}(U_0)$ if $k<m$.  If $k=m$, then $\th=0$. When $k<m$, we set $\th = r\til\th$ with $r$ and $\til\th$ drawn independently from $\Re^+$ and the set $\{ \til\th \in \Re^m: \|\til\th\|=1, \til\th'U_0=0\}$ respectively. The scalar $r^a$ is drawn from a Gamma density for appropriate $a>0$. As a special case, a truncated normal density can be used for $\th$ when $\til\th$ is drawn uniformly, $a=2$ and $r^2 \sim Gam(1,\sig_0)$, $\sig_0>0$. Then $\th$ has the density
$$
\sig_0^{-(m-k)}\exp\frac{-1}{2\sig_0^2}\|\th\|^2 I(\th'U_0 = 0)
$$
with respect to the volume form of $\mc{N}(U_0)$.
Given $k$, $\bm{\sig}$ follows $\pi_2$ supported on $[0,A]^{k+1}$. Under $\pi_2$, the coordinates of $\bm{\sig}$ may be drawn independently with say, $\sig_j^{-2}$ following a Gamma density truncated to $[0,A]$.
If reasonable, assuming $\sig_1 = \ldots = \sig_k = \sig$ with $\sig^{-2}$ following a Gamma density will simplify computations. That said,  a Gamma distribution only satisfies the conditions of Theorem~\ref{t1} when $m\ge 2$.  To satisfy the conditions of Theorem \ref{t4} a \it{truncated transformed} Gamma density may be used. That is, for appropriate $b>0$, we draw  $\sig^{-b}$ from a Gamma density truncated to $[0,A]$.
Given $k$, $k\ge 1$, $P$ follows a $DP(w_kP_k)$ prior. To get conjugacy, we may select $P_k$ to be a Gaussian distribution on $\Re^k$ with covariance $\tau^2I_k$. 
With such a prior the conditions of Theorem~\ref{t4} are satisfied if we choose $a,b,\tau$ and $A$ such that $\tau^2 > 4A^2$, $a < 2(1+\alp)m$ and $a^{-1}+b^{-1} < m^{-1}$. This result is available from Corollary~\ref{t5} the proof of which is provided in the Appendix.

\begin{corr}\label{t5}
Assume that $f_t$ satisfies Assumptions {\bf A1-A4}. Let $\Pi_f$ be a prior on the density space as in Theorem~\ref{t4}. Pick positive constants $a,b,\{\tau_k\}_{k=1}^m$ and $A$ and set the prior as follows.
Choose $\pi_1(.|k)$ such that for $k \le m-1$, $\|\th\|^a$ follows a Gamma density. Pick $\pi_2(.|k)$ such that $\sig, \sig_1, \ldots, \sig_k$ are independently and identicaly distributed with 
$\sig^{-b}$ following a Gamma density truncated to $[0,A]$. Alternatively let $\sig=\sig_1=\ldots=\sig_k$ with $\sig$ distributed as above.
For the $DP(w_kP_k)$ prior on $P$, $k\ge 1$, choose $P_k$ to be a normal density on $\Re^k$ with covariance $\tau_k^2I_k$. Then almost sure strong posterior consistency results if the constants 
satisfy
$\tau_k^2 > 4A^2$, $a < 2(1+\alp)m$ and $1/a + 1/b <  1/m$.
\end{corr}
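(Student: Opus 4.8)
The plan is to derive Corollary~\ref{t5} from Theorem~\ref{t4}. The prior described is exactly an instance of the prior family to which Theorem~\ref{t4} applies (a discrete $p$ with $m$ in its support; a density for $U_0$ on $V_{k,m}$; $\th$ with $\|\th\|^a$ Gamma on $\mc N(U_0)$; the coordinates of $\bm\sig$ i.i.d.\ ``truncated transformed Gamma'' on $[0,A]$; and $P\mid k\sim DP(w_kP_k)$ with $P_k=N(0,\tau_k^2I_k)$), so the only work is to exhibit sequences $h_n\to 0$, $r_n\to\infty$ for which the remaining hypotheses of Theorem~\ref{t4} --- those of Lemmas~\ref{l1}--\ref{l2} together with $n^{-1}(r_n/h_n)^m\to 0$ --- all hold. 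Guided by the two Gamma tails, I would take $r_n=n^{1/a}$ and $h_n=n^{-1/b}$. Several requirements are then immediate. Because $\|\th\|^a$ is Gamma, $Pr(\|\th\|>r_n\mid k)=Pr(\|\th\|^a>n\mid k)$ is of order $n^{c_1}e^{-c_2 n}$ for $k\le m-1$ (the event being empty for $k=m$, where $\th=0$), so this probability decays exponentially as Theorem~\ref{t4} requires; similarly, since each $\sig_j^{-b}$ is Gamma, $Pr(\min(\bm\sig)<h_n\mid k)\le\sum_j Pr(\sig_j^{-b}>n\mid k)$ is a sum of at most $m+1$ terms of order $n^{c_1}e^{-c_2 n}$, giving the exponential decay required in Lemma~\ref{l2}; and $supp(\pi_2(\cdot\mid k))\subseteq[0,A]^{k+1}$ while $p_k$ is positive and continuous on $\Re^k$ by construction. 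The KL condition needed to start Theorem~\ref{t4} follows from Theorem~\ref{t1}: $p(m)>0$, the i.i.d.\ truncated transformed Gamma charges $\Re^+\times(0,\ep)^m$ given $k=m$, and a $DP(w_mP_m)$ with full-support Gaussian base has all of $\mc M(\Re^m)$, hence $P_{f_t}$, in its weak support.

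It remains to check two ``size'' conditions and condition {\bf B1'}. First, $\sum_n r_n^{-2(1+\alp)m}=\sum_n n^{-2(1+\alp)m/a}$ converges precisely when $a<2(1+\alp)m$, one of the hypotheses. Second, the sieve condition coming from Lemma~\ref{l1}'s bound $N(\ep,\mc D_n^\ep)\le C(r_n/h_n)^m$ is $n^{-1}(r_n/h_n)^m=n^{m(1/a+1/b)-1}\to 0$, which holds exactly under the hypothesis $1/a+1/b<1/m$.

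The substantive step, and the one I expect to be the main obstacle, is {\bf B1'} of Lemma~\ref{l2}. Here the Gaussian choice of $P_k$ enters: since $p_k$ is the $N(0,\tau_k^2I_k)$ density, which is radially decreasing, the infimum defining $\del_{kn}$ is attained at $\|\mu\|=A+r_n/2$, so $\del_{kn}^{-1}=(2\pi\tau_k^2)^{k/2}\exp\{(A+r_n/2)^2/(2\tau_k^2)\}$, and for $n$ so large that $r_n\ge 2A$ we may use $(A+r_n/2)^2\le r_n^2$ to get $\del_{kn}^{-1}\le C_k\exp\{r_n^2/(2\tau_k^2)\}$. Substituting this, $h_n^{-k}=n^{k/b}$, and $r_n^2=n^{2/a}$ into the general term of {\bf B1'} gives the bound $C_k\,n^{1+k/b}\exp\{-n^{2/a}(\tfrac1{8A^2}-\tfrac1{2\tau_k^2})\}$; since $\tau_k^2>4A^2$, the constant $\tfrac1{8A^2}-\tfrac1{2\tau_k^2}$ is strictly positive, so this is a stretched exponential that dominates its polynomial prefactor and the series over $n$ (and the finite sum over $k$) converges; dropping the leading factor $n$ gives {\bf B1} analogously. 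The delicate point is exactly this competition: the vanishingly small prior density of $\mu$ at the edge of the growing ball $B_{A+r_n/2,k}$ contributes $+r_n^2/(2\tau_k^2)$ to the exponent, which must be beaten by the $-r_n^2/(8A^2)$ produced by the Gaussian tail of a mixture component placed far out, and it is precisely the slack $\tau_k^2>4A^2$ --- together with $r_n$ growing only polynomially, which is what $a<2(1+\alp)m$ buys --- that makes the balance come out right. With {\bf B1'} in hand all hypotheses of Theorem~\ref{t4} are satisfied, and it yields almost sure strong posterior consistency.
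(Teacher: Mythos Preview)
Your proposal is correct and follows essentially the same approach as the paper's own proof: both choose $r_n=n^{1/a}$, $h_n=n^{-1/b}$ and verify the hypotheses of Theorem~\ref{t4} one by one, with the same Gaussian lower bound $\del_{kn}\ge C\exp\{-r_n^2/(2\tau_k^2)\}$ driving the check of {\bf B1'}. Your write-up is in fact somewhat more explicit than the paper's (you spell out the KL-support verification via Theorem~\ref{t1} and the inequality $(A+r_n/2)^2\le r_n^2$ for $r_n\ge 2A$), but there is no substantive difference in method.
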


A multivariate gamma prior on $\bm{\sig}$ satisfies the requirements for weak but not strong posterior consistency (unless $m=1$). However that does not prove that it is not eligible because
Corollary~\ref{t5} provides only sufficient conditions. Truncating the support of $\bm{\sig}$ is not undesirable because for more precise fit we are interested in low within cluster covariance which will result in sufficient number of clusters.
However the transformation
power $b$ increases with $m$ resulting in lower probability near zero which is undesirable when sample sizes are not high. 

In \cite{abhishek2}, a gamma prior is proved to to be eligible for a Gaussian mixture model (that is, $k=m$) as long as the hyperparameters are allowed to depend on sample size in a suitable way.
However there it is assumed that $f_t$ has a compact support. We expect the result to hold true in this context too.

\section{Identifiability of Parameters}
In many applications, the goal may not be density estimation but estimating the low dimensional set $S$ and its dimension. To do so $S$ must be identifiable.   That is, there must be a unique $S$ corresponding to the model \eqref{e7}.
Denoting by $P_f$, the distribution corresponding to $f$, it follows that
\begin{equation} \label{e8}
P_f = N_m(0,\Sig)* (P\circ\phi^{-1}),
\end{equation}
with * denoting convolution.  Now let $\Phi_P(t)$ be the characteristic function of a distribution $P$, then \eqref{e8} implies that the characteristic function of $f$ (or $P_f$) is
\begin{equation} \label{e12}
\Phi_f(t) = \exp(-1/2 t'\Sig t) \Phi_{P\circ \phi^{-1}}(t), \ t\in \Re^m.
\end{equation}
Once we let $P$ to be discrete, \eqref{e12} suggests that $\Sig$ and $P\circ\phi^{-1}$ can be uniquely determined from $f$.
Now $\phi: \Re^k \lra \Re^m$, $\phi(\Re^k) = S$ and $P\circ \phi^{-1}$ is the distribution of $\phi(Y)$ with $Y \sim P$. It is a distribution on $\Re^m$ supported on the $k$ dimensional affine plane $S$.
To identify $S$ and $k$, we further assume that the \it{affine support} asupp$(P)$ of $P$ is $\Re^k$.
We define asupp$(P)$ as the intersection of all affine subspaces of $\Re^k$ having probability 1. It is an affine subspace containing supp$(P)$ (but may be larger).
In other words, we use a prior for which $P$ is discrete and asupp$(P) = \Re^k$ w.p. 1. The Dirichlet process prior on $P$ given $k$ with a full support base is an appropriate choice.
Then, from the nature of $\phi$, asupp$(P\circ\phi^{-1})$ is an affine subspace of $\Re^m$ of dimension equal to that of asupp$(P)$.
Since asupp($P\circ\phi^{-1}$) is identifiable, this implies that $k$ is also identifiable as its dimension.
Since $S$ contains asupp($P\circ\phi^{-1}$) and has dimension equal to that of asupp($P\circ\phi^{-1}$), hence $S = \mr{asupp}(P\circ\phi^{-1})$.
Hence we have shown that the (sub) parameters $(\Sig,k,S, P\circ\phi^{-1})$ are identifiable once we set a full support discrete prior on $P$ given $k$. Then $U_0 U_0'$ and $\th$ are identifiable as the projection matrix and origin of $S$. 
However $P$ and the coordinate choice $\phi$ (hence $U_0$) are still non-identifiable.  However, if we consider the structure $\Sig = U_0\Sig_0U_0' + \sig^2 (I_m - U_0 U_0')$ with a diagonal $\Sig_0$ and impose some ordering on the diagonal entries of $\Sig_0$, then the columns of $U_0$ become identifiable up to a change of signs as the eigen-rays.

\subsection{Point estimation for subspace $S$}
To obtain a Bayes estimate for the subspace $S$, one may choose an appropriate loss function and minimize the Bayes risk defined as the expectation of the loss over the posterior distribution.
Any subspace is characterized by its projection matrix and origin.  That is, the pair $(R,\th)$ where $R\in M(m)$ and $\th\in \Re^m$ satisfy $R=R'=R^2$ and $R\th=0$. We use $\mc{S}_m$ to denote the space of all such pairs. One particular loss function on $\mc{S}_m$ is
$$ 
L_1((R_1,\th_1), (R_2, \th_2)) = \|R_1 - R_2 \|^2 + \|\th_1 - \th_2\|^2, \ (R_i,\th_i) \in \mc{S}_m.
$$
For a matrix $A = ((a_{ij}))$, its norm-squared is defined as $\|A\|^2 = \sum_{ij} a_{ij}^2 = \mr{Tr}(AA')$.
We find the average of $L_1$ over repeated draws of $(R_2, \th_2)$ from their posterior and choose the value of $(R_1,\th_1)$ for which the average is minimized (if a unique minimizer exists). Then the subspace $S$ is estimated as
$\{ R_1x + \th_1: x \in \Re^m \}$. It has dimension equal to the rank of $R_1$.

If the goal is to estimate the directions of the subspace, we may instead use the loss function
$$
L_2((U_1, w_1), (U_2, w_2)) = \| U_1 - U_2 \|^2 + (w_1 - w_2)^2, \ (U_i,w_i) \in \mc{S}_{m2}.
$$ 
Here the $m\times m$ matrix $U_i$ has the first few columns as the directions of the corresponding subspace $S_i$, the next column gives the direction of the subspace origin $\th_i$ and the rest are set to
the zero vector while $w_i = \|\th_i\|$. Therefore 
$$
\mc{S}_{m2} = \left\{ (U, w)\in M(m)\times\Re^+: \ U'U = \left( \begin{array}{cc}
                                                                                    I & 0 \\
                                                                                    0 & 0
                                                                                    \end{array} \right) \right\}.
$$
We find the minimizer (if unique) $(U_1, w_1)$ of the expected value of $L_2$ under the posterior distribution of $(U_2,w_2)$ and set the estimated subspace dimension $k$ as the rank of $U_1$ minus 1, the principal directions consisting of the first $k$ columns of $U_1$ and the origin as $w_1$ times the last column.
Since the $k$ orthonormal directions of the subspace are only identifiable as rays, one may even look at the loss
\begin{align*}
L_3((U,\th_1), (V,\th_2)) = \sum_{j=1}^m \| U_j U_j' - V_j V_j'\|^2 + \|\th_1 - \th_2\|^2,
\end{align*}
where
\begin{align*}
&(U,\th_1),(V,\th_2) \in \mc{S}_{m3} = \left\{ (U, \th)\in M(m)\times\Re^m: \ U'U = \left( \begin{array}{cc}
                                                                                    I & 0 \\
                                                                                    0 & 0
                                                                                    \end{array} \right), \ U'\th = 0 \right\}.
\end{align*}
Theorems~\ref{t2} and \ref{t3} (proofs of which can be found in the appendix) derive the expression for minimimizer of the risk function corresponding to $L_1$ and $L_2$ and present conditions their uniqueness.
Hereby we denote by $P_n$ the posterior distribution of the parameters given the sample. It is assumed to have finite second order moments. For a matrix $A$, by $A_{(k)}$ we shall denote the submatrix of $A$ consisting of its first $k$ columns.

\begin{thm}\label{t2}
Let $f_1(R,\th) = \int_{(R_2,\th_2)}L_1((R,\th), (R_2, \th_2)) dP_n(R_2,\th_2)$, $(R,\th) \in \mc{S}$. This function is minimized by
$R = \sum_{j=1}^k U_j U_j'$ and $\th = (I - R)\bar\th_2$ where
$\bar R_2 = \int_{M(m)} R_2 dP_n(R_2)$ and $\bar\th_2 = \int_{\Re^m} \th_2 dP_n(\th_2)$ are the posterior means of $R_2$ and $\th_2$ respectively,
$2\bar R_2 - \bar\th_2\bar\th_2' = \sum_{j=1}^m \l_j U_j U_j'$, $\l_1 \ge \ldots \ge \l_m$ is a s.v.d. of $2\bar R_2 - \bar\th_2\bar\th_2'$, and $k$ minimizes
$k - \sum_{j=1}^k \l_j$ on $\{0,\ldots,m\}$. The minimizer is unique if and only if there is a unique $k$ minimizing  $k - \sum_{j=1}^k \l_j$ and $\l_k > \l_{k+1}$ for that $k$.
\end{thm}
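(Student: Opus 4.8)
The plan is to expand the posterior risk $f_1$ into a quadratic in $(R,\th)$, exploit the constraint $R\th=0$ to carry out the minimization over $\th$ in closed form for each fixed $R$, and thereby reduce the problem to maximizing a linear trace functional of a projection matrix against a fixed symmetric matrix — a problem solved by a standard eigenvalue argument.

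First I would write $\langle A,B\rangle=\mr{Tr}(AB')$ and $\|A\|^2=\langle A,A\rangle$, and expand
$$ f_1(R,\th)=\|R\|^2-2\langle R,\bar R_2\rangle+\|\th\|^2-2\th'\bar\th_2+c, $$
where $c=\int(\|R_2\|^2+\|\th_2\|^2)\,dP_n(R_2,\th_2)<\infty$ by the assumed finiteness of the second moments of $P_n$ and is independent of $(R,\th)$. Since $R$ is a rank-$k$ projection, $\|R\|^2=\mr{Tr}(R^2)=\mr{Tr}(R)=k$. Next, holding $R$ fixed, I would minimize $\|\th\|^2-2\th'\bar\th_2=\|\th-\bar\th_2\|^2-\|\bar\th_2\|^2$ over the linear subspace $\mc{N}(R)=\mc{C}(I-R)$; the unique solution is the orthogonal projection $\th=(I-R)\bar\th_2$, with minimal value $\|R\bar\th_2\|^2-\|\bar\th_2\|^2=\bar\th_2'R\bar\th_2-\|\bar\th_2\|^2$. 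Substituting and collecting constants, the problem reduces to minimizing
$$ g(R):=k-\mr{Tr}\!\big(R\,(2\bar R_2-\bar\th_2\bar\th_2')\big) $$
over all projection matrices $R$, of any rank $k\in\{0,\dots,m\}$.

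Then I would set $M=2\bar R_2-\bar\th_2\bar\th_2'$ (symmetric), with ordered eigen-decomposition $M=\sum_{j=1}^m\l_j U_jU_j'$, $\l_1\ge\dots\ge\l_m$ and $\{U_j\}$ orthonormal, and treat the minimizations over rank and over $R$ of a given rank in turn. For fixed rank $k$, writing $R=\sum_{i=1}^k v_iv_i'$ for an orthonormal system $\{v_i\}$ gives $\mr{Tr}(RM)=\sum_{i=1}^k v_i'Mv_i$, and the Ky Fan maximum principle yields $\mr{Tr}(RM)\le\sum_{j=1}^k\l_j$, with equality attained at $R=\sum_{j=1}^kU_jU_j'$ and at that $R$ only precisely when $\l_k>\l_{k+1}$ (otherwise one may rotate within the degenerate top eigenspace). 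Hence $\min_{\mr{rank}(R)=k}g(R)=k-\sum_{j=1}^k\l_j$, and minimizing this over $k$ (using the conventions $\l_0=+\infty$, $\l_{m+1}=-\infty$ at the endpoints $k=0,m$) gives the stated estimator: $k=k_o\in\argmin_{0\le k\le m}(k-\sum_{j=1}^k\l_j)$, $R=\sum_{j=1}^{k_o}U_jU_j'$, $\th=(I-R)\bar\th_2$. Existence of a minimizer is automatic since there are finitely many ranks and, for each, the set of rank-$k$ projections is compact with $g$ continuous.

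For uniqueness: the minimization over $\th$ given $R$ is always uniquely solved, so the minimizer of $f_1$ is unique if and only if that of $g$ is; the latter holds if and only if the minimizing dimension $k_o$ is unique and $\l_{k_o}>\l_{k_o+1}$. I expect the only non-routine ingredient — hence the main obstacle — to be the rank-$k$ trace-maximization step: establishing $\max_{\mr{rank}(R)=k}\mr{Tr}(RM)=\sum_{j=1}^k\l_j$ with the sharp condition for the maximizer to be unique, and then being careful to separate that uniqueness of $R$ at a fixed dimension from the uniqueness of the minimizing dimension $k_o$ itself. Everything else is completing the square and bookkeeping of constants.
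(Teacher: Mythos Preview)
Your proposal is correct and follows essentially the same approach as the paper: expand the risk quadratically, minimize over $\th$ for fixed $R$ to obtain $\th=(I-R)\bar\th_2$, reduce to $k-\mr{Tr}\{R(2\bar R_2-\bar\th_2\bar\th_2')\}$, and then optimize over rank-$k$ projections via the top-$k$ eigenspace. The only cosmetic differences are that the paper writes the initial expansion as a bias--variance identity $f_1(R,\th)=f_1(\bar R_2,\bar\th_2)+\|R-\bar R_2\|^2+\|\th-\bar\th_2\|^2$ and cites an external proposition for the trace-maximization step, whereas you expand directly and name it as Ky Fan.
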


\begin{thm}\label{t3}
Let $f_2(U,w) = \int_{(U_2,w_2)}L_2((U,w), (U_2, w_2))dP_n(U_2, w_2)$, $(U,w) \in \mc{S}_{m2}$. Let $\bar w$ and $\bar U$ denote the posterior means of $w_2$ and $U_2$  respectively.
Then $f_2$ is minimized by
$w = \bar w$ and any $U = [U_1, 0]$, where $U_1 \in V_{k+1,m}$ satisfys
$\bar U_{(k+1)} = U_1 (\bar U_{(k+1)}'\bar U_{(k+1)})^{1/2}$, and $k$ minimizes
$g(k) = k - 2\mr{Tr}(\bar U_{(k+1)}'\bar U_{(k+1)})^{1/2}$ over $\{0,\ldots,m-1\}$.
The minimizer is unique if and only if there is a unique $k$ minimizing  $g$ and $\bar U_{(k+1)}$ has full rank for that $k$.
\end{thm}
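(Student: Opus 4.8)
The plan is to split $L_2$ into its matrix part and its scalar part, reduce the matrix part to an orthogonal Procrustes problem whose solution is a polar decomposition, and then optimize the resulting expression over the discrete rank parameter. First I would expand the integrand and use linearity of the posterior expectation to write, for $(U,w)\in\mc{S}_{m2}$,
\[
f_2(U,w) = \Big(\|U\|^2 - 2\mr{Tr}(U'\bar U)\Big) + (w-\bar w)^2 + C ,
\]
where $C$ collects the terms $\int(\|U_2\|^2+w_2^2)\,dP_n-\bar w^2$ that do not involve $(U,w)$. The summand $(w-\bar w)^2$ is uniquely minimized at $w=\bar w$, so the task reduces to minimizing $\|U\|^2-2\mr{Tr}(U'\bar U)$ over $U\in\mc{S}_{m2}$. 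On $\mc{S}_{m2}$ one has $\|U\|^2=\mr{Tr}(U'U)=\mr{rank}(U)$; writing $j=\mr{rank}(U)$, every element of $\mc{S}_{m2}$ of rank $j\in\{1,\dots,m\}$ is $U=[U_1,0]$ with $U_1\in V_{j,m}$ and the last $m-j$ columns zero (the origin direction is always present, so $j\ge 1$), and then $\mr{Tr}(U'\bar U)=\mr{Tr}(U_1'\bar U_{(j)})$. Hence the minimization decouples: for each $j$ maximize $\mr{Tr}(U_1'\bar U_{(j)})$ over $U_1\in V_{j,m}$, and then minimize $j-2M_j$ over $j$, where $M_j$ denotes the resulting maximum.

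The heart of the argument is the inner maximization. Taking a singular value decomposition $\bar U_{(j)}=W\Lambda Z'$ with $W\in V_{j,m}$, $Z\in O(j)$, $\Lambda=\mr{diag}(\l_1,\dots,\l_j)$, $\l_i\ge 0$, one has $\mr{Tr}(U_1'\bar U_{(j)})=\mr{Tr}(N\Lambda)=\sum_i N_{ii}\l_i$ with $N=Z'U_1'W$. Since $U_1$ and $W$ have orthonormal columns, $N$ has operator norm at most $1$, so $|N_{ii}|\le 1$ and therefore $\mr{Tr}(U_1'\bar U_{(j)})\le\sum_i\l_i=\mr{Tr}\big((\bar U_{(j)}'\bar U_{(j)})^{1/2}\big)=M_j$, using $\bar U_{(j)}'\bar U_{(j)}=Z\Lambda^2 Z'$. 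Equality requires $N_{ii}=1$ whenever $\l_i>0$; when $\bar U_{(j)}$ has full rank this forces $N=I_j$ (a contraction with all diagonal entries equal to $1$ must be the identity, by the equality case of Cauchy--Schwarz), i.e.\ $U_1=WZ'$, equivalently the polar identity $\bar U_{(j)}=U_1(\bar U_{(j)}'\bar U_{(j)})^{1/2}$, and this is then the unique maximizer; when $\bar U_{(j)}$ is rank deficient the polar identity still picks out maximizers but leaves freedom along the zero singular directions, so the maximizer is not unique.

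Substituting $j=k+1$ gives that the infimum of $f_2$ over $U$ of rank $k+1$ equals $(k+1)-2M_{k+1}+C=g(k)+(C+1)$; as $C+1$ is independent of $k$, the optimal $k$ over $\{0,\dots,m-1\}$ is exactly a minimizer of $g$, and the corresponding global minimizer of $f_2$ is $w=\bar w$ together with any $U=[U_1,0]$, $U_1\in V_{k+1,m}$, satisfying the polar identity, as claimed. For uniqueness, two distinct minimizers of $g$ produce optimal matrices of different ranks, hence distinct minimizers of $f_2$; for a unique minimizing $k$, $w=\bar w$ is the unique optimal intercept, while the Procrustes analysis shows $U_1$ is unique precisely when $\bar U_{(k+1)}$ has full rank.

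I expect the main obstacle to be the careful handling of the Procrustes/polar step: establishing $\sum_i N_{ii}\l_i\le\sum_i\l_i$ together with a precise description of the equality set via the diagonal entries of a contraction, and disentangling the rank-deficient case so that the uniqueness condition comes out as exactly ``unique $k$ and $\bar U_{(k+1)}$ of full rank''. The rest---the scalar split, the identity $\|U\|^2=\mr{rank}(U)$ on $\mc{S}_{m2}$, and the bookkeeping of the joint discrete--continuous optimization over $(k,U_1)$---should be routine once this lemma is in hand, and the structure mirrors the proof of Theorem~\ref{t2}.
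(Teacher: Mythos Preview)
Your proposal is correct and follows essentially the same route as the paper: split off the scalar part to get $w=\bar w$, reduce the matrix part to $k_1-2\mr{Tr}(U_1'\bar U_{(k_1)})$ with $k_1=\mr{rank}(U)$, maximize the trace over $U_1\in V_{k_1,m}$ via the polar factor, and then optimize over $k_1$. The only difference is that the paper outsources the inner Procrustes maximization and its uniqueness characterization to an external reference (Theorem~10.2 of \cite{bhatta}), whereas you supply a self-contained SVD argument; your treatment of the equality case (a contraction with unit diagonal entries must be the identity) is exactly what is needed to recover the full-rank uniqueness condition.
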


\section{Posterior Computation}\label{s4}
We now present an algorithm to sample from the joint posterior distribution of $\Th = (k,U_0,\th, \Sigma_0, \sig, P)$ and as a result the density of $X$, given iid realizations $X_1, \ldots, X_n$. Since exact sampling is not possible, we resort to MCMC draws from the posterior.   We first present an algorithm with $k$ being treated as a fixed known quantity.  We then generalize the algorithm to allow unknown $k$.  In both cases,  a straight forward Gibbs sampler can be used.  


\subsection{MCMC algorithm for the fixed $k$}\label{s5.1}
We use  a Dirichlet process (DP) prior for $P$ (i.e., $P \sim DP(w_0P_0))$.   For simplicity and to preserve conjugacy we set $P_0 = N_k(m_{\mu}, S_{\mu})$ with $w_0 = 1$.  We employ the stick breaking representation of the Dirichlet process (Sethuraman \cite{sethuraman}) so that $P = \sum_{j=1}^\infty w_j \del_{\mu_j}$ where $\mu_j$ is drawn $iid$ from $P_0$ and $w_j = v_j\prod_{\ell < j}(1-v_{\ell})$ with $v_j \sim Beta(1, w_0)$.  After introducing cluster labels $S_1,\ldots,S_n$, the likelihood becomes
\begin{align}\label{e18}
f(\bm{x};  U_0, \theta, \Sigma_0, \sigma, P, \mu, S) = & \prod_{i=1}^nw_{S_i}N_m(x_i; U_0\mu_{S_i} + \theta, \Sigma) \\
							 = & \prod_{i=1}^nw_{S_i}N_k(U'_0x_i; \mu_{S_i}, \Sigma_0)N_{m-k}(V'x_i; V'\theta, \sigma^2I_{m-k}) 
\end{align}
where once again $\Sig = U_0\Sigma_0U_0' + \sig^2(I_m - U_0U_0')$.  After prior distributions for $(U_0, \theta, \Sigma_0, \sigma, \mu)$ are appropriately selected  (details of which are given concurrently within the description of the algorithm) it is now possible to describe an algorithm that can be used to construct an MCMC chain that provides draws from the joint posterior distribution of interest by cycling through the following steps.

\begin{enumerate}
\item[{\bf Step 1}.]
Let $\pi(U_0)$ denote a prior distribution for $U_0 \in V_{k,m}$. Using straightforward matrix algebra it can be shown that the full conditional of $U_0$ is 
\begin{align}\label{e101}
[U_0 | -] & \propto \exp\{tr\big[1/2(\sig^{-2}I_k - \Sigma_0^{-1})U_0' (\sum_{i=1}^n x_i x_i')U_0 + \Sigma_0^{-1}(\sum_{i=1}^n \mu_{S_i} x_i')U_0\big]  \}\pi(U_0) \nonumber \\
	& \propto \mr{etr}\{F_1'  U_0 + F_2  U_0' F_3  U_0 \}\pi(U_0), 
\end{align}
\end{enumerate}
where $F_1  = (\sum_{i=1}^n x_i \mu_{S_i}')\Sigma_0^{-1}$,  $F_2  = \frac{1}{2}(\sig^{-2}I_k - \Sigma_0^{-1})$, and $F_3  = \sum_{i=1}^n (x_i x_i')$.  In \eqref{e101} ${\rm etr}(A)$ denotes $\exp(tr(A))$.  Thus,  if one selects a matrix Bingham-von Mises-Fisher prior distribution for $U_0$ (the Uniform distribution on the Steifel manifold being a special case), then the full conditional of $U_0$ is a matrix Bingham-von Mises-Fisher distribution on the space $U'_0\theta = 0$.   Strategies for sampling from matrix Bingham-von Mises-Fisher are developed in Hoff \cite{hoff2}.  A straightforward extension of their work can be implemented to sample from a matrix Bingham-von Mises-Fisher that has $U'_0\theta = 0$ as a constraint.

\begin{enumerate}
\item[{\bf Step 2}.]
As discussed in Section 3.2 a good prior choice for $\theta$ is a truncated normal $\theta \sim N_m(m_{\theta}, S_{\theta})I[U'_0\theta = 0]$.  The full conditional under this prior is the following truncated multivariate normal
\begin{align}\label{5.4}
[\th | -]\sim N_m (m_{\theta}^*, S_{\theta}^*)I[U_0'\th = 0],
\end{align}
where  $S_{\theta}^* = (n\Sigma^{-1} + S^{-1}_{\theta})^{-1}$ and $m_{\theta}^* = S_{\theta}^*(\Sigma^{-1} \sum_{i=1}^n x_i + S^{-1}_{\theta}m_{\theta}).$   
\end{enumerate}
Notice that if $W$ is an orthonormal basis of $\mc{N}(U'_0)$, then there exists a $\tilde{\theta} \in \Re^{m-k}$ such that $\theta = W\tilde{\theta}$ and $\tilde{\theta} \sim N_{m-k}(W'm_{\theta}^*, W'S_{\theta}^*W)$.   
This fact can be exploited to sample from \eqref{5.4}.

\begin{enumerate}
\item[{\bf Step 3}.]
Update $S_i$ for $i = 1, 2, \ldots, n$ by sampling from the multinomial conditional posterior distribution
\[
Pr(S_i = j|-) \propto w_j \exp\{-1/2 (U'_0 x_i - \mu_j)' \Sig_0^{-1} (U' _0x_i - \mu_j) \}, \ j=1,\ldots,\infty.
\]
\end{enumerate}
To make the total number of states finite the block Gibbs sampler of Ishwaran and James \cite{ish} may be implemented.  Alternatively, the  slice sampling ideas described in Yau, Papaspiliopoulos, Roberts, and Homes \cite{yua2011},  Walker  \cite{walker2007}, or Kalli, Griffin, and Walker  \cite{griffin2011} could be used.  The remainder of the algorithm is described from the perspective of using a block Gibbs sampler which requires truncating the number of atoms to $N$.

\begin{enumerate}
\item[{\bf Step 4}.]
Update the DP atom weights by setting $w_j = v_j \prod_{l=1}^{j-1} (1- v_l)$, $j=1,\ldots, N$ after drawing
$$[v_l | -] \sim Beta(1 + n_j, w_0 + \sum_i I(S_i > j))$$ 
with $n_j = \sum_i I(S_i = j)$ and setting $v_N = 1$.
\end{enumerate}

\begin{enumerate}
\item[{\bf Step 5}.]
Update the DP atoms $\{\mu_j: j =1, \ldots, N\}$ independently by sampling from 
\[
[\mu_j | - ] \sim N_k(m_{\mu}^*, S_{\mu}^*),
\]
where $S_{\mu}^* = (n_j\Sigma_0^{-1} + S_0^{-1})^{-1}$ and $m_{\mu}^* = S_{\mu}^*(U'_0 \Sigma_0^{-1} \displaystyle \sum_{i:S_i=j} x_{i} + S^{-1}_{\mu} m_{\mu})$. 
\end{enumerate}

\begin{enumerate}
\item[{\bf Step 6}.]
Using a $\sigma^{-2} \sim {\rm Ga}(a,b)$ prior,  $\sigma^{-2}$ can be updated using
\[
[\sigma^{-2}|-] \sim {\rm Ga}(\frac{1}{2}n(m-k) + a, b + \frac{1}{2}\sum_{i=1}^nx'_ix_i + \frac{n}{2}\theta'\theta - \frac{1}{2}\sum_{i=1}^nx'_iU_0U'_0x_i - \theta'\sum_{i=1}^nx_i)
\]
Under the simplifying assumption that $\Sigma_0 = \sigma^{2}I_k$ the full conditional of $\sigma^{-2}$ becomes
\[
[\sigma^{-2}|-] \sim {\rm Ga}(\frac{1}{2}nm + a, b + \frac{1}{2}\sum_{i=1}^n(x_i  - U_0\mu_{S_i} - \theta)'(x_i  - U_0\mu_{S_i} - \theta))
\]
\end{enumerate}

\begin{enumerate}
\item[{\bf Step 7}.]
Using a truncated Gamma distribution for $\sigma^{-2}_j$ (i.e., $\sigma^{-2}_j \sim Gam(a, b)I[\sigma^{-2}_j \in [0,A]]$) allows one to update $\sigma^{-2}_j$ using the following truncated Gamma distribution.
\[
[\sigma^{-2}_{j}|-] \sim {\rm GAM}(\frac{n}{2} + a, b + \frac{1}{2}\sum_{i=1}^n(U_0'x_i - \mu_{S_i})^2_{j})I[\sigma^{-2}_j \in [0,A]].
\]  
\end{enumerate}

Reasonable starting values can decrease the number of MCMC iterates discarded as burn in and therefore may be desirable.  For $U_0$,  the first $k$ eigen-vectors of the sample covariance matrix can be used.  For $\theta$ one may use $(I_m - U_sU'_s)\bar{x}$ where $U_s$ denotes the starting value for $U_0$.  The initial labels $(S_i)$ and coordinate cluster means ($\mu_j$) can be obtained by applying a k-means algorithm to $U'_s x_i$.


\subsection{MCMC algorithm for $k$ unknown}
In the case that $k$ is unknown,  a prior distribution needs to be assigned to $k$ and  $U_0 \in O(m)$.   In what follows, to denote  the $k$th coordinate and the 1st $k$ coordinates of  $\mu_j$ we use $\mu_{jk}$ and $\mu_{j(k)}$ respectively.   Similarly,  let $U_{0(k)}$ represent the first $k$ columns of $U_0$ while $U_{0(-k)}$ will represent the remaining $m-k$ columns. 

After introducing cluster labels, the full posterior is proportional to
$$
\pi(w,\mu,\sig,\Sig_0, U_0,\theta, k, S) \propto \prod_{i=1}^n w_{S_i} N_m \big( x_i; U_{0(k)}\mu_{S_i(k)} + \th,  \Sig \big).
$$
Here $\pi$ is a general expression for the prior.  The first $k$ columns of the $m\times m$ matrix $U_0$ explain the subspace directions and the first $k$ coordinates of $\mu_j$ the cluster locations.  

Allowing $k$ to be unknown requires altering steps 1 and 5 of the MCMC algorithm described in the previous section and adding an additional step.   We first describe the additional step and then the adjustments to steps 1 and 5.  Continuing from step 7 from the previous section we add

\begin{enumerate}
\item[{\bf Step 8}.]
Update $k$ by drawing a value for $k$ from the following complete conditional  
\begin{align} \label{s8}
Pr(k = \ell |-) & \propto p(\ell)\prod_{i=1}^n N_{m}(x_i; U_{0(\ell)}\mu_{S_i(\ell)} + \th,  \Sig) \ \mbox{for $\ell = 1, \ldots, m-1$}.
\end{align}
\end{enumerate}


When the data dimension $m$ is very high, computing all $m-1$ probabilities can become computationally expensive.  An approach to reduce the number of states would be to introduce a slice sampling variable $u$ drawn from $Unif(0,1)$.    In this setting we replace $p(k)$ in \eqref{s8} by $I(u< p(k))$.  This means that $k$ will be drawn from the set $\{k\colon p(k)>u\}$ and $u \sim Unif(0,p(k))$.  Updating the upper bound for the subspace dimension  ($K$) can be done by drawing $u \sim Unif(0,p(k))$ and setting $K = \max\{k \le m: p(k) > u)\}$.   
\begin{enumerate}
\item[{\bf Step 1b.}]
Use the complete conditional derived in step 1 from Section 6.1 to update $U_{0(k)}$, then draw $U_{0(-k)} = [U_{0k+1}, \ldots, U_{0K}]$ from  $\pi(U_{0(-K)} | U_{0k})$ such that $U'_{0(-k)}\theta = 0$. 

\end{enumerate}


When a uniform prior is being considered, step1b requires one to sample uniformly from $V_{K-k,m}$ perpendicular to the column space of $[U_{0k}, \theta] \equiv U_\theta$. As discussed in Chikuse\cite{chikuse}, $U^*$ is a uniform sample from $V_{K-k,m}$ if  $U^* = T(T'T)^{-1/2}$ for $T$  a $m\times(K-k)$ matrix of independent standard normal random variables.  To ensure that $U^* \in  \mc{N}(U_\theta')$ first project $T$ into $\mc{N}(U_\theta')$ by  setting $T^* = (I - U_\theta U_\theta')T$.  Then $U^* = T^*(T^{*'}T^*)^{-1/2}$ is a uniform draw from $V_{K-k,m}$ perpendicular to column space of $U_\theta$.  If $\pi(U_0)$ is not a uniform distribution on $O(m)$ see Hoff \cite{hoff2} for sampling strategies.
 
\begin{enumerate}
\item[{\bf Step 5b.}]
Use the full conditional found in step 5 from Section 6.1 to update $\mu_{j(k)}$.  Then draw $\mu_{jk+1}, \ldots, \mu_{jK}$ from their respective prior distributions. 
\end{enumerate}

With $k$ unknown, the MCMC chain tends to get stuck on certain values of $k$ for many iterations.  The stickiness occurs because the probabilities in step 8 are computed for all $\ell = 1, \ldots, K$ using a $U_0$ that was updated for a particular value of $k$.   To make the chain less sticky, we employ adaptive MCMC methods as outlined in Roberts and Rosenthal \cite{roberts}.  We applied the adaptation to step 8 and step 5 of the algorithm.  Specifically, we raised each of the un-normalized probabilities in \eqref{s8} to the  $1 - \exp(-0.0001t)$ power (where $t = 1, \ldots, M$ denotes the $t^{\rm th}$ MCMC iterate) and replace $S_{\mu}^*$ found in step 5 of Section~\ref{s5.1} with $(1+100\exp(-0.001t))S_{\mu}^*$.  In this way,  the space of cluster locations is initially more thoroughly explored.  Notice that the adaptation vanishes at an exponential rate, which guarantees that the proper regularity conditions hold.


\section{Simulation Study}


To assess the proposed methodology's density estimation ability we conducted a small simulation in which a density is estimated using observations in $\Re^m$ originating from the following finite mixture 
\begin{align}\label{e61}
\bm{x} \sim \sum_{h=1}^{c+1} \pi_hN_{m}(\bm{\eta}_h, \sigma^2I). 
\end{align}
Here $\eta_h$ is a vector of zeros save for the $h$th entry which is 1.    We considered the following three factor's influence on the density estimate.
\begin{enumerate}
\item Bandwidth (setting $\sigma^2=0.01$, $\sigma^2=0.05$, and $\sigma^2=0.1$)
\item Sample size (setting $n=50$, $n=100$, $n=200$)
\item Dimension of the affine subspace (considering $k=2$ and $k=5$).
\end{enumerate}

To show that \eqref{e61} falls into the current class of models, consider the case of $k=2$ and $m=100$.  For this case we have the $100$-dimensional vector $\theta = (1/3, 1/3, 1/3, 0, \dots, 0)'$.  
Further one possible representation of the $100\times2$ dimensional $U_0$ is
\begin{align}\label{e62}
U_0 = 
\left(
\begin{array}{ccccc}
  1/\sqrt{2}   & -1/\sqrt{2} & 0 & \ldots & 0 \\
  1/\sqrt{6}   &  1/\sqrt{6} & -2/\sqrt{6} & \ldots & 0
\end{array}
\right)'.
\end{align}

As competitors, we considered a finite mixture with $f(x) = \sum_{h=1}^c\pi_hN_m(\mu_h, \sigma^2\bm{I}_m)$   and  an infinite mixture $f(x) = \sum_{h=1}^{\infty}\pi_hN_m(\mu_h, \sigma^2\bm{I}_m)$.  The number of components employed in the finite mixture were 3 and 6 for the two respective affine subspace dimensions considered.  For each synthetic data set created, 100 observations  were generated to assess out of sample density estimation.  To compare the density estimates between the procedures employed,  we used the following Kullback-Leibler type distance  
\begin{align} \label{e20}
\frac{1}{D} \sum_{d=1}^D\frac{1}{T} \sum_{t =1}^T \left ( \sum_{\ell = 1}^{100} \log{f_0(\bm{x}^*_{\ell d})} - \sum_{\ell = 1}^{100}\log{\hat{f}_t(\bm{x}^*_{\ell d} )}\right). 
\end{align}
Here $f_0$ denotes the true density function, $d$ is an index for the $D=25$ datasets that were generated, and $\bm{x}^*_{\ell d}$ is the $\ell$th out of sample observation generated from the $d$th data set and $\hat{f}_t$ is the estimated density.

For each of the 25 generated data sets, a density estimate was obtained using the proposed method with $k$ unknown and for $k=1$, $k=2$, and $k=5$.   We entertained a discrete uniform and stick-breaking type prior for $k$ with no appreciable difference in parameter estimation.  We set $\sigma_1 = \ldots, \sigma_k = \sigma$.   For each scenario 1000 MCMC iterates were used to approximate the density.  A burn-in of 1000 was used when $k$ was fixed.   When $k$ was considered an unknown a burn-in of 10,000 was used with a thin of 100.   Convergence was monitored using trace plots of the collected MCMC iterates.  

The value of equation \eqref{e20} for each scenario considered averaged across the 25 datasets can be found in Table \ref{SimStudyResults}.    Under the column ``Unknown $k$" can be found the results when $k$ was treated as an unknown.  The results from the method when $k$ is fixed at a specified value can be found under one of the three ``$k=$" columns.  Results from the finite mixture and infinite mixture are under the columns ``Fin Mix" and ``Inf Mix".

\begin{table}[htdp]
\caption{Results of the Kullback-Liebler type distance comparing estimated densities from each of the procedures considered in the simulation study to the density used to generate data.}
\begin{center}
\begin{tabular}{cc ccccccc}
True $k$ & $\sigma^2$ & $n$ &  Unknown $k$  & $k=1$ &  $k=2$ & $k=5$ & Fin Mix & Inf Mix\\ \midrule 
\multirow{9}{*}{2} & \multirow{3}{*}{0.01} & 50 	& 582.98 & 1557.39 & 392.84 & 412.77 & 2580.81 & 2612.92 \\ 
&							       & 100 & 274.76 & 1494.65 & 205.49 & 214.32 & 1539.74 & 1619.44 \\ 
&							       & 200 & 139.21 & 1474.90 & 106.06 & 111.85 & 165.92 & 1429.98 \\ 		\cmidrule(lr){2-9}
& \multirow{3}{*}{0.05}                                &  50 	& 590.24 & 421.93 & 314.44 & 394.53 & 710.46 & 714.26 \\ 
&							       &100 	& 271.79 & 371.65 & 172.61 & 192.39 & 465.87 & 499.58 \\ 
&							       & 200 & 128.30 & 315.85 & 96.37 & 105.34 & 153.54 & 160.66 \\ 		\cmidrule(lr){2-9}
& \multirow{3}{*}{0.1}                                  &  50 	& 589.01 & 232.33 & 250.50 & 365.38 & 426.69 & 426.29 \\ 
&							       &100 	& 280.99 & 189.05 & 154.91 & 201.62 & 320.02 & 324.92 \\ 
&							       & 200 & 134.07 & 162.34 & 87.55 & 104.65 & 160.54 & 176.29 \\ 		\midrule
\multirow{9}{*}{5} & \multirow{3}{*}{0.01} & 50 	& 2292.44 & 2645.34 & 2268.70 & 1015.80 & 3003.87 & 3029.25 \\  
&							       & 100 & 2075.99 & 2564.26 & 2164.32 & 500.65 & 2341.99 & 2838.46 \\ 
&							       & 200 & 2138.87 & 2503.26 & 2065.54 & 256.78 & 1646.43 & 2046.68 \\ 	\cmidrule(lr){2-9}
& \multirow{3}{*}{0.05}                                & 50  	& 872.18 & 646.12 & 654.20 & 714.96 & 798.29 & 801.22 \\ 
&							       & 100 & 604.07 & 604.73 & 556.36 & 421.40 & 676.65 & 690.04 \\ 
&							       & 200 & 506.53 & 550.92 & 489.39 & 231.47 & 460.85 & 512.93 \\ 		\cmidrule(lr){2-9}
& \multirow{3}{*}{0.1}                                  & 50  	& 773.15 & 315.85 & 357.02 & 484.87 & 447.79 & 456.62 \\ 
&							       & 100 & 431.56 & 294.42 & 309.34 & 358.66 & 351.17 & 353.89 \\ 
&							       & 200 & 283.02 & 246.20 & 237.94 & 206.01 & 286.96 & 288.10 \\ 
\bottomrule
\end{tabular}
\end{center}
\label{SimStudyResults}
\end{table}%

Generally speaking, the procedure outlined in Section 3 does a much better job at recovering the true density relative to the mixtures.  This is the case even if $k$ is fixed at the wrong value.  That said,  as expected,  fixing $k$ at the true value provides the best results.  The only instances in which the finite mixture estimated the density more accurately than our density estimator is when the dimension of the affine subspace is set to 5 and the sample size is small.  However, even in small samples, if $k$ is fixed at the correct value, then the density is recovered more accurately using our procedure compared to mixtures.    Also, it appears as $\sigma^2$ increases, then cluster separation diminishes and estimating $k$ is more difficult.   Hence the varying $k$ procedure does not perform as well in estimating the density (which is to be expected) but still out performs the mixtures.  In addition, as expected larger sample sizes are conducive to better density estimation as the Kullback-Leibler type distance generally gets smaller as $n$ increases.  






\section{Nonparametric Classification with Feature Coordinate Selection} \label{s5}
We consider a categorical $Y$ that takes on values from the set $\{1,\ldots,c\}$.  The goal of classification is to identify the class to which $Y$ belongs using $m$ characteristics of $Y$.  These characteristics are typically denoted by $X \in \Re^m$.  Because the association between $X$ and $Y$ may not be causal, our approach is to model $X$ and $Y$ jointly and from the joint derive the conditional.    Letting $M_c(y; \bm{\nu}) = \prod_{\ell = 1}^c \nu_{\ell}^{I[y=\ell]}$, we consider the following joint model  
\begin{equation}\label{e13}
(X,Y) \sim f(x,y) = \int_{\Re^k\times S_{c}}  N_m( x;\phi(\mu), \Sig )  M_c(y; \bm{\nu}) P(d\mu \, d\bm{\nu}),
\end{equation}
with $S_{c} = \{ \bm{\nu}  \in [0,1]^c \colon \sum \nu_{\ell} =1 \}$ denoting the $c-1$ dimensional simplex.  Note that \eqref{e13} is a generalization of \eqref{e7} and \eqref{e90} along the lines of the joint model proposed  in Bhattacharya  and Dunson \cite{abhishek1},   though they focus on kernels for predictors on models that accommodate non-Euclidean manifolds and there is no dimensionality reduction.

When $m$ is large it is often the case that most of the information present in the data is used to model the marginal of $X$ while the association between $X$ and $Y$ is disregarded.  In order to avoid this,  we instead pick a few coordinates of $X$, say $k$ many, and model the joint density of the $k$ coordinates of $X$ and $Y$. The remaining coordinates of $X$ are modeled independently as equal variance Gaussians, though in preliminary simulation studies, we find that our performance in estimating the subspace and predicting $Y$ is robust to the true joint distribution of the `non-signal' predictors that are not predictive of $Y$.  By setting a prior on the coordinate selection method, we can pick out those few `important' coordinates which completely explain the conditional distribution of $Y$, very flexibly. 
Without loss of generality an isotropic transformation on $X$ can be used which would provide some benefit with regards to coordinate inversion.  That is, we can locate a $k\le m$ and $U_0 \in V_{k,m}$ such that
\begin{align}
(U'_0X, Y) \sim f_1(x_1,y) = \int_{\Re^k\times S_{c}}  N_{k}(x_1;\mu,\Sig_0)  M_c(y; \bm{\nu}) P(d\mu \, d\bm{\nu}), \ x_1 \in \Re^k,
\end{align}
along with a $\th \in \Re^m$ and $V \in V_{m-k,m}$ satisfying $V'U_0 = 0$ and $\th'U_0 = 0$, such that
\begin{equation}\label{e14}
V'X \sim N_{m-k}(V'\theta, \sig^2 I_{m-k})
\end{equation}
independently of $(U'_0X, Y)$.
With such a structure, the joint distribution of $(X,Y)$ becomes \eqref{e13}
where
\begin{eqnarray*}
\phi:\Re^k \ra \Re^m, \ \phi(y) = U_0y + \th, \ U_0 \in V_{k,m}, \ \th \in \Re^m, \ U'_0\th = 0,\\
\Sig = U_0(\Sig_0 - \sig^2I_k) U'_0 + \sig^2 I_m, \ \Sig_0 \in M^+(k), \sig^2 \in \Re^+.
\end{eqnarray*}
The conditional density of $Y=y$ given $X=x$ can be expressed as
\begin{equation}\label{e16}
p(y|x;\Th) = \frac{\int_{\Re^k\times S_{c}}  N_k(U'_0x;\mu,\Sig_0) M_c(y; \bm{\nu}) P(d\mu \, d\bm{\nu})} {\int_{\Re^k\times S_{c}} N_k(U'_0x;\mu,\Sig_0)P(d\mu \, d\bm{\nu})}
\end{equation}
with parameters $\Th = (k,U_0,\Sig_0, P, \theta, \sigma^2)$. A draw from the posterior of $\Th$ given model \eqref{e13} will give us a draw from the posterior of the conditional. 
When $P$ is discrete (which is a standard choice), the conditional distribution of $Y$ given $X$ and $\Th$ can be thought of as a weighted $c$ dimensional multinomial probability vector with the weights depending on $X$ only through the selected $k$-dimensional coordinates $U'_0X$. 
For example, if $P = \sum_{j=1}^\infty w_j \del_{(\mu_j,\bm{\nu}_j)}$, then
\begin{align} \label{e50}
p(y|x;\Th) = \sum_{j=1}^\infty \til w_j(U'_0x)  M_c(y; \bm{\nu}_j)
\end{align}
where
$\til w_j(x) = \frac{w_j N_k(x;\mu_j,\Sig_0)}{\sum_{i=1}^\infty w_i N_k(x;\mu_i,\Sig_0)}$ and $x \in \Re^k$ for $j=1,\ldots,\infty$.   
We refer to \eqref{e50} as the principal subspace classifier (PSC).

The above is easily adapted to a regression setting by considering a low dimensional response $Y \in \Re^l$ and replacing the multinomial kernel used for $Y$ with a Gaussian kernel.  In this setting the joint model becomes
\begin{eqnarray}\label{e17}
(X,Y) \sim \int_{\Re^k\times\Re^l} N_m(x;\phi(\mu),\Sig_x)N_l(y;\psi,\Sig_y)P(d\mu \, d\psi),
\end{eqnarray}
which produces the following conditional model
\begin{equation}\label{e18}
p(y|x;\Th) = \frac{\int_{\Re^k\times\Re^l} N_k(U'_0x;\mu,\Sig_0) N_l(y;\psi,\Sig_y)P(d\mu \ d\psi)} {\int_{\Re^k\times\Re^l} N_k(U_0'x;\mu,\Sig_0)P(d\mu \, d\psi)}.
\end{equation}
For a discrete $P$ this conditional distribution becomes the following mixture whose weights depend on $X$ only through its $k$-dimensional coordinates $U'_0X$
\begin{align}
p(y|x;\Th) = \sum_{j=1}^\infty \til w_j(U'_0x) N_l(y;\psi_j,\Sig_y).
\end{align}
As the regression model is a straightforward modification of the classifier, we focus on the classification case for sake of brevity.

\subsection{MCMC algorithm}
Sampling from the posterior of $\Th = (k,U_0,\Sig_0, P, \theta, \sigma^2)$ requires adjusting step 3 of Section 6's algorithm and adding a step to update $\bm{\nu}$.   We continue to assume $P \sim DP(\alpha, P_0)$.  However,  in the present setting $P_0 = N(m, S)\otimes Dir(\bm{a}_{\nu})$.   Now the data likelihood, after introducing cluster labels $S_1,\ldots,S_n$, becomes $\prod_{i=1}^n w_{s_i} N_m(x_i; U\mu_{S_i}+\th, \Sig_0)M_c(y_i; \bm{\nu}_{S_i})$.  An MCMC chain that provides draws from the joint posterior of $\Theta$ can be obtained by adding the following two steps to the algorithm in Section 6. 

\begin{enumerate}
\item[{\bf Step 3}.] 
Update $S_i$ for $i = 1, 2, \ldots, n$ by sampling from the following conditional posterior distribution
\[
Pr(S_i = j|-) \propto w_j \exp \left\{-1/2 (\mu_j'\Sig_0^{-1}\mu_j -2\mu_j'\Sig_0^{-1}U'_0x_i) \right\} \prod_{\ell = 1}^c \nu_{j\ell}^{I[y_i=\ell]}   
\]
for $ j = 1, \ldots, \infty$.   Once again, one may introduce slice sampling latent variables and implement the exact block Gibbs sampler or use the block Gibbs sampler directly to make the total number of states finite.  

\item[{\bf Step 9}.] 
Update the $\bm{\nu}_j$'s by sampling from $[\bm{\nu}_j | - ] \sim Dir(a^*_1, \ldots, a^*_c)$, where $a^*_{\ell} = \sum_{i=1}^nI[y_i = \ell, S_i=j] + a_{\ell}$ for $\ell = 1, \ldots, c$.
\end{enumerate}


\subsection{Simulation Study}
To demonstrate the performance of the classifier we conduct  a small simulation study.  Synthetic data sets are generated using two methods.  The first method treats the PSC as a data generating mechanism, the second is similar to the data generating scheme found on page 16 of Hastie, Tibshirani and Freedman \cite{HTF:2008} (here after referred to as HTF). We briefly describe both.  

When the PSC is being used as a data generating mechanism, the $X$ matrix is generated using \eqref{e61}.  We set $m=100$, $\sigma^2=0.1$, and $k=2$.  As this produces a feature space with three clusters,  $Y$ takes on values in $\{1, 2, 3\}$ with probabilities $[\til w_1(U'_0X), \til w_2(U'_0X), \til w_3(U'_0X)]$ where $U_0$ is found in \eqref{e62}.   The second data generating scenario consists of two classes with 100 observations each. The observations are drawn from the Gaussian mixture $\sum_{j=1}^{10} 1/10 N_{100}(m_j,1/5I)$.  The 10 means, $m_j$,  for the two classes are generated independently from $N_{100}(\eta_1, I)$ and $N_{100}(\eta_2, I)$ respectively  ($\eta_1$ and $\eta_2$ are defined in \eqref{e61}).
For each scenario 100 data sets are generated.  For the first,  100 training and 100 testing observations were generated and for the second 200 test and 200 training observations were used.   The PSC, $k$ nearest neighbor (KNN), and mixture discriminant analysis (MDA) were employed to classify the response from the testing data sets.  KNN and MDA procedures were selected as competitors because KNN is an algorithmic based procedure that is known to perform well in a variety of settings (see HTF) and MDA is a flexible model based Gaussian mixture classifier (see  Hastie and Tibshirani \cite{hastie96}).  We employ the {\tt knn} \cite{ClassPack} and {\tt mda} \cite{mdaPack} functions both of which are available freely from the {\tt R} software  \cite{Rsoft} to implement the KNN and MDA methods.  For the KNN we set $k=6$ for data generated from the PSC and $k=25$ for HTF data. These values were deemed to produce the smallest misclassification rate for a few synthetic data sets from both data generating scenarios.   For the same reason, with regards to the MDA, the number of components for each classes Gaussian mixture was set at 5.  Choosing $k$ in this manner provides an advantage to KNN and MDA when comparing misclassification rates to  the PSC.  

For the PSC, 1000 MCMC iterates were collected after a burn-in of 10,000 and thinning of 100.  Convergence was assessed using history plots of the MCMC draws for a few data sets.  The out of sample misclassification rates averaged over the 100 data sets can be found under each procedures respective heading in Table \ref{SimStudyResults2}.

\begin{table}[htdp]
\caption{Misclassification rates from the simulation study.  Data were generated using the PSC and the method detailed on page 16 of Hastie, Tibhshirani and Feedman (HTF)\protect \cite{HTF:2008} }
\begin{center}
\begin{tabular}{c cccc}
Data Generating & & & &\\
Mechanism &   PSC  & KNN  & MDA\\ \midrule 
PSC & 0.060 & 0.158  & 0.639 \\ 
HTF & 0.047 &  0.269  & 0.369\\ \bottomrule
\end{tabular}
\end{center}
\label{SimStudyResults2}
\end{table}%

It appears as if the PSC is able to more accurately classify the categorical response from the testing data compared to KNN and MDA.  This appears to be true regardless of what $k$ is fixed to be.  Preliminary studies  indicated that the PSC classifier still out preformed KNN and MDA (though not as drastically) even with correlated and non-Gaussian non-signal predictors.

\subsection{Illustration on Real Datasets}  
We now apply the PSC to two real data sets both of which are readily available in {\tt R}.  The first consists of two classes and 7 quantitative predictors.    The predictors are physiological measurements taken on Pima Indian women with the goal of predicting the presence or absence of diabetes.  To these 7 predictors we add another 93 which are comprised of random standard Gaussian draws.   The dataset is split randomly into training and testing sections.  The training section consists of 200 women, 68 of which are diagnosed with diabetes, while the testing section consists of 332 women, 109 of which are diagnosed with diabetes.   

The second data set we consider is the so called iris data set.  Here the response consists of three classes each one representing a specific flower species.  The four predictors are length and width measurements corresponding to the sepal and petal of a flower.  The goal is to use these four measurements to predict the flower species.  To the four predictors we add 96 that are comprised of random standard Gaussian draws.   The data set consists of 150 observations with each flower species having 50.  Fifty observations were randomly selected to comprise the testing data while the remaining 100 were used for the training data set.

To both data sets we applied the PSC in addition to KNN classifier and a MDA classifier.  For the KNN classifier, we chose the value of $k$ that minimized the misclassification rate which turned out to be $k=5$ for the iris data and $k=24$ for the diabetes data.  Similarly, the number of components comprising the  Gaussian mixtures of the MDA classifier  was selected on the basis of minimizing the misclassification rate.  The number of components turned out be 5 for the iris data and 7 for the diabetes data.  Note that choosing $k$ in this manner gives an unfair advantage to KNN and MDA relative to PSC, which does not use the test data at all in training.  We fit the PSC to both data sets by collecting 1000 MCMC iterates after a burn-in of 10,000 and thinning of 100.  Convergence was monitored using trace plots from two chains that were started at different values.   Prior to analysis variables were standardized.  The misclassification rates can be found in Table \ref{RealDataResults}

\begin{table}[htdp]
\caption{Misclassification rates for the iris and diabetes data sets. }
\begin{center}
\begin{tabular}{c cccccc}
Data set &   PSC &  KNN  & MDA\\ \midrule 
Iris & 		0.22 &  0.55  & 0.51 \\ 
Diabetes & 	0.26 &  0.29  & 0.37 \\ \bottomrule
\end{tabular}
\end{center}
\label{RealDataResults}
\end{table}%

It appears that the PSC was able to classify the testing data response in the presence of a high dimensional feature space much more accurately than either KNN or MDA.

\section{Conclusions}
This article has proposed a novel methodology for nonparametric Bayesian learning of an affine subspace underlying high-dimensional data.  Clearly, massive-dimensional data are now commonplace and there is a need for flexible methods for dimensionality reduction that avoid parametric assumptions.  In this context, the Bayesian paradigm has substantial advantages over commonly used machine learning, computer science and frequentist statistical methods that obtain a point estimate of the subspace or manifold which the data are concentrated near.  As there is unavoidably substantial uncertainty in subspace or manifold learning, it is important to fully account for this uncertainty to avoid misleading inferences and obtain appropriate measures of uncertainty in estimating densities, performing predictions and identifying important predictors.  We accomplish this in a Bayesian manner by placing a probability model over the space of affine subspaces, while developing a simple and efficient computational algorithm relying on Gibbs sampling to estimate the subspace and its dimension or model-average over subspaces of different dimension. The model is theoretically proved to be highly flexible and posterior consistency is achieved under appropriate prior choices. 
The proposed model and computational algorithm should be broadly useful beyond the density estimation and classification settings we have considered.

A potential alternative to our approach mentioned in Section 1 is to use a mixture of sparse factor models to build a tangent space approximation to the manifold the data are concentrated near.  Sparse Bayesian normal linear factor models are a successful approach for dimensionality reduction (Carvalho \emph{et al}., \cite{carvalho}; Bhattacharya and Dunson \cite{abdd}), but make restrictive normality assumptions and are limited in their ability to reduce dimensionality by linearity assumptions.  By mixing factor models, one can certainly obtain a more flexible characterization, but challenging computational issues arise in accommodating uncertainty in the number of factors and locations of zeros in the factor loadings matrix for each of the multivariate Gaussian components in the mixtures.  Indeed, even in modest dimensions for a normal linear factor models, Lopes and West \cite{lopes} encountered difficulties in efficiently inferring the number of factors, and recommending using a reversible jump MCMC algorithm that required a preliminary MCMC run for each choice of the number of factors.  For mixture of factor models, one obtains a extremely rich over-parametrized black box.  We propose a fundamentally new alternative that directly specifies an identifiable model based on geometry, while also developing an efficient Gibbs sampler that can infer the dimension of the subspace automatically without RJMCMC.  Although our initial focus was on data in a Euclidean space, related models can be developed for non-Euclidean manifold data, as we will explore in ongoing work.

{\bf Acknowlegements}:  This work was partially supported by Award Number R01ES017436 from the National Institute of Environmental Health Sciences.  The content is solely the responsibility of the authors and does not necessarily represent the official views of the National Institute of Environmental Health Sciences or the National Institutes of Health.

\appendix
\appendixpage
\section{Proofs}\label{a1}

As a reminder in what follows  $B_{r,m}$ refers to the set $\{ x \in \Re^m \colon \|x\| \le r \}$. For a subset $\mc{D}$ of densities and $\ep>0$, the $L_1$-metric entropy $N(\ep,\mc{D})$ is defined as the logarithm of the minimum number of $\epsilon$-sized (or smaller) $L_1$ subsets needed to cover $\mc{D}$.

\subsection{Proof of Lemma \eqref{l1}}
\begin{proof}
 Any density $f$ in $\mc{D}_n^\ep$ can be expressed as 
$\int_{\Re^m} N_m(\nu,\Sig) Q(d\nu)$ with $\Sig = U_0 \Sig_0 U_0' + \sig_0^2(I_m - U_0 U_0')$, $Q = P \circ \phi^{-1}$, 
$\phi(x) = U_0x$, and $(k,U_0,\th,\Sig_0,\sig,P)\in H_n^\ep$. 
The assumption on $\pi_2$ and $H_n^\ep$ will imply that $\Sig$ has all its eigen-values in $[h_n^2,A^2]$.

We also claim that 
$Q(B_{\s{2}r_n,m}^c) < \ep$.
To see that, note that
$\|\phi(\mu)\|^2 = \|\mu\|^2 + \|\th\|^2 \le 2r_n^2$ whenever $\|\mu\|\le r_n$ and $\|\th\|\le r_n$.
Hence $B_{r_n,k} \subseteq \phi^{-1}(B_{\s{2}r_n,m})$ if $\|\th\| \le r_n$.
Therefore $\ep > P(B_{r_n,k}^c) \ge P\big( (\phi^{-1}(B_{\s{2}r_n,m}))^c \big) = P\circ\phi^{-1}\big(B_{\s{2}r_n,m}^c \big)$ 
for all $(P,\th) \in H_n^\ep$. Hence the claim follows.

Therefore $$ \mc{D}_n^\ep \subseteq \til{\mc{D}}_n^\ep = \{ f = \int N_m(\nu,\Sig) Q(d\nu): Q(B_{\s{2}r_n,m}^c) < \ep, \ \l(\Sig) \in [h_n^2,A^2] \}, $$
$\l(\Sig)$ denoting the eigen-values of $\Sig$. From Lemma 1 of Wu and Ghosal \cite{WuGhosal2010}, it follows that
$N(\ep,\til{\mc{D}}_n^\ep) \le C (r_n/h_n)^m$ and this completes the proof.
\end{proof}

\subsection{Proof of Lemma \eqref{l2}}
The proof is similar in scope to the proof of Lemma 2  in Wu and Ghosal \cite{WuGhosal2010}. Throughout the proof, $C$ will denote constant independent of $n$.

\begin{proof}
Given $k, U, \th, \bm{\sig}$ and $\un{\mu}_n=$ $\mu_1, \ldots, \mu_n$ iid $P$, $X_i \sim N_m\big( \phi(\mu_i), \Sig\big)$, $i=1,\ldots, n$, independently and are independent of $P$.
Hence
$$
Pr\big( P(B_{r_n,k}^c) \ge \ep \big| k,\mv{X}_n \big) = E\big( Pr\big( P(B_{r_n,k}^c) \ge \ep \big| k,\un{\mu}_n \big) \big| k,\mv{X}_n \big).
$$  
From \cite{ferguson}, given $\un{\mu}_n$ and $k$, for $A\subseteq \Re^k$,
$P(A) \sim Beta\big(w_kP_k(A) + N(A), w_k(1-P_k) + n - N(A)  \big)$
where $N(A) = \sum_{i=1}^n I_{\{\mu_i \in A\}}$.
Hence using the Markov inequality,
$$
Pr\big( P(B_{r_n,k}^c) \ge \ep \big| k,\un{\mu}_n \big) \le \frac{w_k P_k(B_{r_n,k}^c) + N(B_{r_n,k}^c)}{\ep(n+w_k)}.
$$
Therefore
\begin{align*}
E\big( Pr\big( P(B_{r_n,k}^c) \ge \ep \big| k, \mv{X}_n \big) \le \frac{w_k P_k(B_{r_n,k}^c)}{\ep(n+w_k)} + \frac{1}{\ep(n+w_k)} \sum_{i=1}^n Pr\big( \mu_i \in B_{r_n,k}^c \big| k, \mv{X}_n \big).
\end{align*}
Denote the above two terms as $T_1$ and $T_2$. Then $E_{f_t}T_1 = T_1 \lra 0$ as $r_n\ra \infty$.
Under the marginal prior given $k$, $\un{\mu}_n$ has an exchangable distribution $\pi_n(\un{\mu}_n|k)$ on $(\Re^k)^n$ (see \cite{ferguson}). Also since $\mv{X}_n$ are iid given $f_t$, it follows that
$$
E_{f_t}(T_2) = \frac{n}{\ep(n+w_k)} E_{f_t} \big\{ Pr\big( \mu_1 \in B_{r_n,k}^c \big| k, \mv{X}_n \big) \big\}.
$$
Now
\begin{align*}
Pr\big( \mu_1 \in B_{r_n,k}^c \big| k, \mv{X}_n \big) \le  Pr\big( \mu_1 \in B_{r_n,k}^c, \min(\bm{\sig}) > h_n \big| k, \mv{X}_n \big) + \\
Pr( \min(\bm{\sig}) \le h_n \big| k, \mv{X}_n ).
\end{align*}
The last term above converges to $0$ a.s. by the assumption  on $\pi_2$.
Hence to complete the proof, it remains to show that
$$
E_{f_t} \big\{ Pr\big( \mu_1 \in B_{r_n,k}^c, \min(\bm{\sig}) > h_n \big| k, \mv{X}_n \big) \big\} \lra 0 \t{ as } n\ra \infty.
$$
To compute the probability in above, we denote by $\pi_{1n}(\mu_1|\mu_{-1},k)$ the conditional distribution of $\mu_1$ given $\mu_{-1} = (\mu_2,\ldots,\mu_n)$ , and by $\pi_{-1n}(\mu_{-1}|k)$ the marginal
distribution of $\mu_{-1}$ under the joint $\pi_n$. 
Then
$$
Pr\big( \mu_1 \in B_{r_n,k}^c, \min(\bm{\sig}) > h_n \big| k, \mv{X}_n \big) = A(\mv{X}_n)/B(\mv{X}_n)
$$
where $A(\mv{X}_n) =$
\begin{align*}
 \mathop{\int}_{\min(\bm{\sig})>h_n,\|\mu_1\|>r_n} \prod_{i=1}^n N_m(X_i; \phi(\mu), \Sig) d\pi_{1n}(\mu_1|\mu_{-1},k) d\pi_{-1n}(\mu_{-1}|k)d\pi_1(U_0,\th|k)d\pi_2(\bm{\sig}|k)
\end{align*}
and $B(\mv{X}_n) =$
$$
\int \prod_{i=1}^n  N_m(X_i; \phi(\mu), \Sig) d\pi_{1n}(\mu_1|\mu_{-1},k) d\pi_{-1n}(\mu_{-1}|k)d\pi_1(U_0,\th|k)d\pi_2(\bm{\sig}|k).
$$
We use $E_{f_t}\{A(\mv{X}_n)/B(\mv{X}_n)\} \le$
\begin{align}\label{e11}
\mathop{\sup}_{X_1 \in B_{r_n/2,m}}\frac{A(\mv{X}_n)}{B(\mv{X}_n)} \int_{B_{r_n/2,m}}f_t(x)dx +  \int_{ B_{r_n/2,m}^c} f_t(x)dx.
\end{align}
and upper bound the terms in above.

First we upper bound $A(\mv{X}_n)$ when $\|X_1\| \le r_n/2$.
We express $N_m(X_1; \phi(\mu_1), \Sig)$ as 
$$
N_k(U_0'X_1;\mu_1, \Sig_0)  
$$
and note that $\|X_1\|\le r_n/2$, $\|\mu_1\| > r_n$ and $h_n < \sig_j \le A $ $\forall j \le k$ implies
$$
N_k(U_0'X_1;\mu_1, \Sig_0) \le C h_n^{-k} \exp \frac{-r_n^2}{8A^2}.
$$
Therefore $A(\mv{X}_n) \le$
\begin{align} \label{e9}
\begin{split}
 C h_n^{-k} \exp\frac{-r_n^2}{8A^2} \int (\sig^{-2})^{\frac{m-k}{2}} \exp\frac{-1}{2\sig^2}(X_1 - \th)'(I_m - U_0U_0')(X_1 - \th)\\
 \prod_{i=2}^n N_m(X_i;\phi(\mu_i),\Sig) 
d\pi_{-1n}(\mu_{-1}|k)d\pi_1(U_0,\th|k)d\pi_2(\bm{\sig}|k).
\end{split}
\end{align}

Next we lower bound $B(\mv{X}_n)$ when $X_1 \in B_{r_n/2,m}$.
The conditional distribution $\pi_{1n}$ can be expressed as 
$
\frac{1}{w_k + n-1}\sum_{i=2}^n \del_{\mu_i} + \frac{w_k}{w_k + n-1} P_k
$ (see \cite{ferguson}).
Hence $B(\mv{X}_n) \ge$
$$
\frac{w_k}{w_k + n-1} \int \prod_{i=1}^n N_m(X_i; \phi(\mu_i), \Sig) p_k(\mu_1)d\mu_1 d\pi_{-1n}(\mu_{-1}|k)d\pi_1(U,\th|k)d\pi_2(\bm{\sig}|k).
$$
Now 
$$
\int N_k( U_0'X_1;\mu_1, \Sig_0) p_k(\mu_1)d\mu_1 \ge 
\int_S N_k( U_0'X_1;\mu_1, \Sig_0)  p_k(\mu_1)d\mu_1
$$
where 
$$
S = \{\mu_1: \sum_{l=1}^k \sig_l^2 (U_k'X_1 - \mu_1)^2_l \le 1 \}.
$$
For $\mu_1 \in S$, $N_k\big( U_0'X_1;\mu_1, \Sig_0) \ge \prod_1^k \sig_j^{-1} e^{-1/2}$
and $p_k(\mu_1) \ge \del_{kn}$ with $\del_{kn}$ defined in the Lemma.
Therefore
$$
\int_S N_k( U_0'X_1;\mu_1, \Sig_0)  p_k(\mu_1)d\mu_1 \ge  C\del_{kn}\prod_1^k \sig_j^{-1}\int_S d\mu_1 = C\del_{kn}
$$
and hence when $\|X_1\| \le r_n/2$, $B(\mv{X}_n) \ge$
\begin{align} \label{e10}
\begin{split}
 C n^{-1} \del_{kn} \int (\sig^{-2})^{\frac{m-k}{2}} \exp\frac{-1}{2\sig^2}(X_1 - \th)'(I_m - U_0U_0')(X_1 - \th) \prod_{i=2}^n N_m(X_i; \phi(\mu_i), \Sig) \\
d\pi_{-1n}(\mu_{-1}|k) d\pi_1(U_0,\th|k) d\pi_2(\bm{\sig}|k).
\end{split}
\end{align}
Combining \eqref{e9} and \eqref{e10}, we get
$$
\sup_{\|X_1\| \le r_n/2} \frac{A(\mv{X}_n)}{B(\mv{X}_n)} \le C n \del_{kn}^{-1} h_n^{-k} \exp(-r_n^2/8A^2).
$$
Plug this in \eqref{e11} to conclude
$E_{f_t}\{A(\mv{X}_n)/B(\mv{X}_n)\} \le$
\begin{align} \label{e15}
C n \del_{kn}^{-1} h_n^{-k} \exp(-r_n^2/8A^2) + Pr_{f_t}(\|X\| > r_n/2)
\end{align}
which converges to zero by assumption.

Under assumption {\bf B1'} and $\sum r_n^{-2(1+\alp)m} < \infty$ the sequence in \eqref{e15} has a finite sum which results in the stronger conclusion.
This completes the proof.
\end{proof}

\subsection{Proof of Corollary \eqref{t5}}
\begin{proof}
By Theorem~\ref{t4}, to show a.s. strong posterior consistency, we need to get positive sequences $r_n$ and $h_n$ which satisfy
\begin{align}
 n^{-1}(r_n/h_n)^m \lra 0, \ \sum r_n^{-2(1+\alp)m} < \infty, \t{ and} \label{e21}\\
\sum_{n=1}^\infty n\del_{kn}^{-1}h_n^{-k}\exp(-r_n^2/8A^2) < \infty, \label{e22}
\end{align}
and the prior probabilities $Pr(\|\th\|> r_n |k)$ and $Pr(\min(\bm{\sig}) < h_n |k)$ decay exponentially.
Set $r_n = n^{1/a}$ and $h_n = n^{-1/b}$.
Then \eqref{e21} is clearly satisfied.

By the choice of $p_k$, $k\ge 1$, it is easy to check that
$\del_{kn} \ge C \exp\frac{-r_n^2}{2\tau_k^2}$ with $C$ denoting positive constants independent of $n$ all throughout.
Then \eqref{e22} is clearly satisfied because of the assumption $\tau_k^2 > 4A^2$.

Because $\|\th\|^a$ follows a Gamma distribution given $k$, $k \le m-1$, the probability $Pr(\|\th\|> r_n |k)$ can be upper bounded by $C\exp(-\l r_n^a)$ for some $\l>0$. This decays exponentially 
with $r_n = n^{1/a}$.

Lastly it remains to check that $Pr(\min(\bm{\sig}) < h_n |k)$, decays exponentially. When the coordinates of $\bm{\sig}$ are all equal, the probability can be upper bounded by 
$C\exp(-\l h_n^{-b})$ for some $\l>0$. This decays exponentially with $h_n = n^{-1/b}$.
In case the coordinates are iid, the probability can be upper bounded by 
$Cn\exp(-\l h_n^{-b})$ which also decays exponentially by the choice of $h_n$. 
\end{proof}

\subsection{Proof of Theorem \eqref{t2}}

\begin{proof}
Simplify $f_1$ as
\begin{align} \label{e19}
&f_1(R,\th) = f_1(\bar R, \bar\th) + \|R - \bar R\|^2 + \|\th - \bar\th\|^2  \notag \\
&= f_1(\bar R, \bar\th) + \|R - \bar R\|^2 + \|R\bar\th\|^2 + \|(I-R)(\th - \bar\th)\|^2 \notag \\
&\ge f_1(\bar R, \bar\th) + \|R - \bar R\|^2 + \|R\bar\th\|^2.
\end{align}
Equality holds in \eqref{e19} iff $\th = (I-R)\bar\th$. Then
$$
f_1(R,\th) = k - \mr{Tr}\{(2\bar R - \bar\th\bar\th')R\} + C
$$
where $k = $Rank($R$) and $C$ denotes something not depending on $R,\th$.
From the proof of Proposition 11.1\cite{bhatta}, given $k$ one can show that the value of $R$ minimizing $f_1$ above is $\sum_{j=1}^k U_j U_j'$ and the minimizer is unique iff $\l_k > \l_{k+1}$. Then 
$$
f_1(R,\th) = k - \sum_{j=1}^k\l_j + C.
$$
Now one needs to find the $k$ minimizing the above risk which is as mentioned. This completes the proof.
\end{proof}

\subsection{Proof of Theorem \eqref{t3}}

\begin{proof}
The minimizer $w=\bar w$ is obvious. Then
$$
f_2(U,\bar w) = \|U - \bar U\|^2 + C = k_1 - 2\mr{Tr}\bar U_{(k_1)}'U_{(k_1)} + C,
$$
$k_1$ being the rank of $U$ and $C$ symbolizing any constant not depending on $U$.
For $k_1$ fixed, it is proved in Theorem 10.2\cite{bhatta} that  the minimizer $U$ is as in the theorem. It is unique iff $\bar U_{(k_1)}'\bar U_{(k_1)}$ is invertible. Plug that $U$ and the risk function becomes, as a function of $k_1$, 
$$
f_3(k_1) = k_1 - 2\mr{Tr}(\bar U_{(k_1)}'\bar U_{(k_1)})^{1/2}.
$$
We find the value of $k_1$ between $1$ and $m$ minimizing $f_3$ and set $k=k_1-1$.
\end{proof}

\singlespace

\bibliographystyle{plain} 
\bibliography{reference}

\end{document}